\documentclass[letterpaper, 10 pt, conference]{ieeeconf}
\usepackage{amsmath,amsfonts,amssymb}

\usepackage{hyperref}
\usepackage{cite}

\usepackage{algorithm}
\usepackage{algpseudocode}

\usepackage{booktabs}
\usepackage{multirow}
\usepackage{tabularx}

\usepackage{graphicx}
\usepackage[caption=false,font=footnotesize]{subfig}
\usepackage{tikz, pgfplots, tikz-3dplot}
\usetikzlibrary{shapes, arrows, arrows.meta, fit, positioning, calc, 3d}
\usepgfplotslibrary{fillbetween}
\pgfplotsset{compat=1.18} 
\tikzset{>=latex}

\providecommand*{\M}[1]{\mathbf#1}     
\providecommand*{\V}[1]{\boldsymbol#1} 
\providecommand*{\T}[1]{\mathrm{#1}}   

\DeclareMathOperator*{\argmin}{arg\,min}
\newtheorem{theorem}{Theorem}
\newtheorem{lemma}{Lemma}
\newtheorem{definition}{Definition}

\newtheorem{assumption}{Assumption}

\allowdisplaybreaks
\IEEEoverridecommandlockouts
\begin{document}
\bstctlcite{IEEEexample:BSTcontrol}

\title{\LARGE \bf
Computing the Exact Pareto Front in Average-Cost\\ Multi-Objective Markov Decision Processes
}
\author{
\IEEEauthorblockN{Jiping~Luo~and~Nikolaos~Pappas}
\IEEEauthorblockA{Department of Computer and Information Science, Link{\"o}ping University, Sweden\\
Email: jiping.luo@liu.se, nikolaos.pappas@liu.se}
}

\author{Jiping~Luo~and~Nikolaos~Pappas
\thanks{This work has been supported by ELLIIT, CUGS, and 6G-LEADER.}
\thanks{J. Luo and N. Pappas are with the Department of Computer and Information Science, Link\"oping University, Link\"oping 58183, Sweden. Emails: {\tt\small jiping.luo@liu.se} and {\tt \small nikolaos.pappas@liu.se}.}%
}

\maketitle
\begin{abstract}
    Many communication and control problems are cast as multi-objective Markov decision processes (MOMDPs). The complete solution to an MOMDP is the Pareto front. Much of the literature approximates this front via scalarization into single-objective MDPs. Recent work has begun to characterize the full front in discounted or simple bi-objective settings by exploiting its geometry. In this work, we characterize the exact front in average-cost MOMDPs. We show that the front is a continuous, piecewise-linear surface lying on the boundary of a convex polytope. Each vertex corresponds to a deterministic policy, and adjacent vertices differ in exactly one state. Each edge is realized as a convex combination of the policies at its endpoints, with the mixing coefficient given in closed form. We apply these results to a remote state estimation problem, where each vertex on the front corresponds to a threshold policy. The exact Pareto front and solutions to certain non-convex MDPs can be obtained without explicitly solving any MDP.
\end{abstract}

\vspace{-0.02in}
\section{Introduction}
\vspace{-0.04in}
Achieving a desired trade-off among multiple conflicting objectives is fundamental in many communication and control systems~\cite{baillieul2007control, park2018wireless, kountouris2021semantics, luo2025semantic}. In practice, there is rarely a free lunch: performance improvements are often achieved at the expense of increased resource consumption. Many such problems are cast as Markov decision processes (MDPs). 

An MDP is a controlled Markov chain with state process $\{X_t\}_{t\geq 1}$ evolving on a state space ${\sf X}$, action process $\{U_t\}_{t\geq 1}$ taking values in an action set ${\sf U}$, and one-stage cost $c(X_t,U_t)$ incurred at time $t$. At each time $t$, an action $U_t$ is taken according to a decision rule $U_t \sim \pi_t(\cdot|I_t)$, where $I_t = \{X_{1:t}, U_{1:t-1}\}$ denotes the history prior to decision epoch $t$. The $\gamma$-discounted optimal control problem seeks a \emph{policy} $\pi = \{\pi_t\}_{t\geq 1}$ that minimizes the objective function
\begin{equation}
    J_{\beta}^\gamma(\pi) := \sum_{t=1}^\infty\gamma^{t-1} {\sf E}^\pi \big[c(X_t, U_t)\mid X_1 \sim \beta\big],\label{eq:discounted-cost}
\end{equation}
where $\gamma \in (0, 1)$ is the discount factor, and $\beta$ is the initial state distribution. In many applications~\cite{schenato2008optimal, leong2017sensor, luo2025remote, luo2025cost}, the average-cost criterion is more appropriate
\begin{equation}
    J_\beta(\pi) := \limsup_{T\to \infty}\frac{1}{T}\sum_{t=1}^T {\sf E}^\pi \big[c(X_t, U_t)\mid X_1 \sim \beta \big].\label{eq:average-cost}
\end{equation}

A multi-objective MDP (MOMDP) considers multiple costs. Let $\V{c}(X_t, U_t) = (c_1(X_t, U_t), \ldots, c_K(X_t, U_t))$ denote the one-stage cost vector, and let $J_{\beta, k}(\pi)$ denote the $k$th objective. The goal is to minimize the objective vector
\begin{equation}
\V{J}_{\beta}(\pi) = \big(J_{\beta, 1}(\pi), \ldots, J_{\beta, K}(\pi)\big).\label{eq:vector-cost}
\end{equation}
The main conceptual difficulty is that, because the objectives may be conflicting, there typically does not exist a single policy that minimizes all components of $\V{J}_\beta(\pi)$. Instead, optimality is defined in the Pareto sense. A policy $\pi^*$ is Pareto optimal if there does not exist any other policy $\pi$ such that
\begin{equation}
J_{\beta, k}({\pi}) \leq J_{\beta, k}(\pi^*)\,\,\,\text{for all}\,\,\,k = 1,\ldots, K, 
\end{equation}
and $J_{\beta, i}({\pi}) < J_{\beta, i}(\pi^*)$ for at least one index $i$. The complete solution to an MOMDP is the set of all Pareto optimal policies, whose objective vectors form the \emph{Pareto front}.

Much of the literature operates under the view that computing the \emph{exact} Pareto front is computationally intractable except for very small problems~\cite{white1982multi, chatterjee2006markov, wiering2007computing, roijers2013survey, van2014multi}. Most studies \emph{approximate} the front using \emph{linear scalarization} methods~\cite{roijers2013survey}. Let $\V{w} = (w_1, \ldots, w_K)$ denote a weight vector representing the user's preference over the objectives. This method seeks to minimize
\begin{equation}
    J_{\beta}(\pi; \V{w}) = \langle \V{w}, \V{J}_\beta(\pi) \rangle,
\end{equation}
where $\langle \V{a}, \V{b} \rangle = \sum_i a_i b_i$ denotes the inner product. Clearly, the scalarized problem is a single-objective MDP with additive cost $c^{\V{w}}(x, u) = \langle \V{w}, \V{c}(x, u)\rangle$. This makes the approach particularly attractive, as it allows standard dynamic programming (DP) or reinforcement learning (RL) techniques to generate some representative Pareto-optimal policies.

However, this method has two main limitations. First, traversing all weight vectors is neither tractable nor sufficient to construct the entire Pareto front~\cite{luo2025value}. Second, many applications exhibit nonlinear relationships among the objectives~\cite{roijers2013survey}. For example, increasing communication frequency accelerates hardware aging superlinearly~\cite{luo2025remote}, resulting in additional maintenance costs or reduced equipment longevity that cannot be adequately captured by linear weights. Let $f: \mathbb{R}^{K} \to \mathbb{R}$ denote a nonlinear scalarization function. The nonlinear scalarized problem seeks to minimize 
\begin{equation}
    J_{\beta}(\pi; f) = f(\V{J}_{\beta}(\pi)). \label{eq:nonlinear scalarized obj}
\end{equation}
Classical DP methods cannot handle such nonlinear scalarized MDPs because nonlinear scalarization breaks the additivity required by the Bellman equation~\cite{roijers2013survey}.

This work studies the exact Pareto front in MOMDPs. The recent papers~\cite{luo2025value} and~\cite{li2025how} investigate related problems in different settings; \cite{luo2025value} analyzes an average-cost bi-objective MDP via DP and Pareto analysis, while \cite{li2025how} studies a $\gamma$-discounted MOMDP using linear scalarization and DP. In contrast, we focus on average-cost MOMDPs using occupancy measures and linear programming (LP). Our main contributions are as follows.

(1) We characterize the geometry of the Pareto front in average-cost MOMDPs. Under standard assumptions, the set of achievable objective vectors forms a convex polytope, and the Pareto front is a continuous, piecewise linear surface lying on the boundary of this polytope. Each vertex of the polytope corresponds to a deterministic policy, and policies at adjacent vertices differ in exactly one state. Each edge is realized by a convex combination of the policies at its endpoints, and the mixing coefficient is given in closed form.

(2) We show that for nonlinear scalarized MDPs, if the scalarization function $f$ is strictly increasing, the optimal solution lies on the Pareto front and can be realized as a convex combination of at most $K$ deterministic policies.

(3) We apply these results to a remote state estimation problem, in which each vertex on the Pareto front corresponds to a threshold policy. We show that the exact Pareto front and the optimal solution to the nonlinear scalarized MDP can be obtained without explicitly solving any MDP.

\textit{Organization:} Section~\ref{sec:preliminary} presents preliminaries on MDPs and occupancy measures. Our main results are summarized in Section~\ref{sec:main results}, and a remote estimation example is examined in Section~\ref{sec:example}. Proofs are provided in the Appendix.

\section{Preliminaries}\label{sec:preliminary}
\subsection{MDP}
An average-cost MOMDP is a tuple $({\sf X}, {\sf U}, \M{P}, \V{c}, \beta)$, where ${\sf X}$ and ${\sf U}$ denote the state and action spaces, $\beta$ is the initial state distribution, and $\V{c}(x, u) = (c_1(x, u), \ldots, c_K(x, u))$ is the cost vector. The transition kernel $\M{P}$ governs the evolution of the state process, where $\M{P}(x, u, x^\prime)$ is the probability of transitioning from state $x$ to $x^\prime$ given action $u$. The objective vector under a policy $\pi$, $\V{J}_\beta(\pi)$, is defined in~\eqref{eq:average-cost}--\eqref{eq:vector-cost}. A linear scalarized MDP with weight vector $\V{w}$ is described by the tuple $({\sf X}, {\sf U}, \M{P}, c^{\V{w}}, \beta)$, where $c^{\V{w}}(x, u) = \langle \V{w}, \V{c}(x, u) \rangle$.

Next, we introduce several policy classes that will be used throughout the paper. Let $\Pi$ denote the set of all admissible policies, which may be history-dependent and non-stationary. We define the following subclasses:
\begin{itemize}
    \item $\Pi_{\T M}$: Markov policies; $\pi \in \Pi_{\T M}$ if it depends on the history $I_t$ only through $X_t$, i.e., $U_t \sim \pi_t(\cdot|X_t)$.
    \item $\Pi_{\T S}$: Markov \emph{stationary} policies; $\pi \in \Pi_{\T S}$ if it uses the same rule at all times, i.e., $U_t \sim \pi(\cdot|X_t)$.
    \item $\Pi_{\T D}$: Markov stationary \emph{deterministic} policies; $\pi \in \Pi_{\T D}$ if it selects an action with certainty, i.e., $U_t = \pi(X_t)$.
\end{itemize}

Clearly, $\Pi_{\rm D} \subset \Pi_{\rm S} \subset \Pi_{\rm M} \subset \Pi$. For clarity, we refer to policies in $\Pi_{\rm S}$ as stationary and those in $\Pi_{\rm D}$ as deterministic. Moreover, two deterministic policies $\pi_1, \pi_2 \in \Pi_{\T D}$ are said to be \emph{adjacent} if they differ in exactly one state; that is, there exists a state $i_0 \in {\sf X}$ such that $\pi_1(x) = \pi_2(x)$ for all $x \neq i_0$ and $\pi_1(i_0) \neq \pi_2(i_0)$.

For any $\bar{\Pi} \subseteq \Pi$, let 
\begin{equation}
    \mathcal{J}_\beta^{\bar{\Pi}} := \{\V{J}_\beta(\pi):\pi \in \bar{\Pi}\}
\end{equation}
denote the set of objective vectors achievable by $\bar{\Pi}$. 

\begin{definition}
    $\M{P}$ is \emph{unichain} if for every $\pi \in \Pi_{\T S}$, $\{X_t\}$ governed by $\M{P}_\pi$, where $\M{P}_\pi(x, x^\prime) = \sum_{u}\pi(u|x)\M{P}(x, u, x^\prime)$, forms a Markov chain with a single recurrent class.
\end{definition}

Unless otherwise stated, it is assumed in this paper that
\begin{assumption} 
    ${\sf X}$ and ${\sf U}$ are finite sets and $\M{P}$ is unichain.
\end{assumption}

We now introduce a class of mixing policies that will play an important role in our analysis.
\begin{definition}\label{def:mixing policy}
    Let $\alpha \in (0,1)$ and $\pi_1, \pi_2 \in \Pi_{\T S}$. Then $\pi = \alpha \pi_1 + (1-\alpha)\pi_2 \in \Pi_{\T S}$ is called a \emph{mixing policy}. It is called a \emph{simple mixing policy} if $\pi_1, \pi_2 \in \Pi_{\T D}$ and they are adjacent.
\end{definition}

For any simple mixing policy, randomization occurs only at state $i_0$ (where $\pi_1$ and $\pi_2$ differ). As a result, $\{X_t\}$ forms a regeneration process with regeneration events defined by the successive returns to $i_0$. The sample path can therefore be decomposed into i.i.d. cycles, the expected length of which is determined by the selected policy.

\subsection{Occupancy Measure}
For any policy $\pi$ and initial distribution $\beta$, define the state-action frequency up to time $t$ by
\begin{equation}
    \mu_{\beta, \pi}^t(x, u) := \frac{1}{t}\sum_{n=1}^t {\sf P}^\pi_\beta (X_n = x, U_n = u),
\end{equation}
and the state frequency by $\mu_{\beta, \pi}^t(x):=\sum_{u\in {\sf U}}\mu_{\beta, \pi}^t(x, u)$. The vector $\mu_{\beta,\pi}^t= (\mu_{\beta,\pi}^t(x,u): x\in {\sf X}, u \in {\sf U})$ is called the \emph{finite-horizon occupancy measure}. 

We define the \emph{(limiting) occupancy measure} under $\pi$ and $\beta$ as the set of all subsequential limits of $\{\mu_{\beta,\pi}^t\}$, i.e.,
\begin{equation}
    \Gamma_{\beta}^\pi := \big\{
        \lim_{n \to \infty} \mu^{t_n}_{\beta, \pi}:\{t_n\} \subseteq \mathbb{N}~\text{with}~t_n \to \infty
    \big\}. 
\end{equation}
The limits are defined componentwise. Subsequential limits are used because the ordinary limit $\lim_{t\to\infty}\mu_{\beta,\pi}^t$ need not exist for non-stationary policies. When the limit does exist, we denote it by $\mu_{\beta,\pi}$. 

For any $\bar{\Pi} \subseteq \Pi$, the set of occupancy measures achievable under policies in $\bar{\Pi}$ is denoted by
\begin{equation}
    \Gamma_{\beta}^{\bar{\Pi}} := \bigcup_{\pi \in \bar{\Pi}}\Gamma_{\beta}^\pi.
\end{equation}
For brevity, we denote the full set as $\Gamma_\beta = \Gamma_\beta^\Pi$.

The following lemma characterizes the existence of the ordinary limit (see, e.g.,~\cite[Ch.~3.4]{bartle2011introduction}).

\begin{lemma}\label{lemma:sequential_limits}
    $\Gamma_{\beta}^\pi$ is non-empty for all $\pi \in \Pi$. The largest and smallest subsequential limits are 
    \begin{equation*}
        \sup(\Gamma_{\beta}^\pi) = \limsup_{t\to \infty} \mu^t_{\beta, \pi}\,\,\,\text{and}\,\,\,\inf(\Gamma_{\beta}^\pi) = \liminf_{t\to \infty} \mu^t_{\beta, \pi}.
    \end{equation*}
    Moreover, the ordinary limit $\lim_{t\to\infty}\mu_{\beta,\pi}^t$ exists if and only if $\Gamma_{\beta}^\pi$ is a singleton, i.e., $\sup(\Gamma_{\beta}^\pi) = \inf(\Gamma_{\beta}^\pi) = \mu_{\beta, \pi}$. 
\end{lemma}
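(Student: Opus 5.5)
The lemma is a statement about a bounded sequence in a finite-dimensional Euclidean space, so I plan to reduce it to classical real analysis applied componentwise. Since ${\sf X}$ and ${\sf U}$ are finite, each $\mu_{\beta,\pi}^t$ is a vector in $\mathbb{R}^{|{\sf X}||{\sf U}|}$. Its entries are averages of probabilities, so $\mu_{\beta,\pi}^t$ lies in the probability simplex, which is compact.

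Non-emptiness of $\Gamma_\beta^\pi$ then follows from the Bolzano--Weierstrass theorem. Every sequence in a compact subset of $\mathbb{R}^n$ has a convergent subsequence, and its limit is by definition an element of $\Gamma_\beta^\pi$. The simplex is closed, so every element of $\Gamma_\beta^\pi$ is itself a probability vector.

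For the sup/inf characterization, I interpret $\sup$ and $\inf$ componentwise, matching the componentwise definition of the limits. Fix a coordinate $(x,u)$ and consider the scalar sequence $a_t=\mu_{\beta,\pi}^t(x,u)$. I will show that the set of $(x,u)$-coordinates of elements of $\Gamma_\beta^\pi$ has supremum $\limsup_t a_t$, and that this supremum is attained.

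\emph{Upper bound.} If $a_{t_n}\to L$, then $L\le \limsup a_t$, since $\sup_{s\ge t_n}a_s\ge a_{t_n}$.

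\emph{Attainment.} This is the one point needing care. A scalar subsequence $a_{t_n}\to\limsup a_t$ exists by the standard construction in~\cite[Ch.~3.4]{bartle2011introduction}, but the full vectors $\mu^{t_n}_{\beta,\pi}$ need not converge along it. I will pass to a further subsequence, again by compactness, along which the whole vector converges. Its limit lies in $\Gamma_\beta^\pi$ and has $(x,u)$-coordinate equal to $\limsup a_t$.

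The argument for $\inf$ and $\liminf$ is symmetric. One caveat I will state explicitly: $\sup(\Gamma_\beta^\pi)$ means the componentwise supremum vector. It need not itself belong to $\Gamma_\beta^\pi$, although each coordinate value is attained by some element.

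Finally, I prove the equivalence between existence of the ordinary limit and $\Gamma_\beta^\pi$ being a singleton.

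\emph{Limit exists $\Rightarrow$ singleton.} If the limit exists, every subsequence converges to it, so $\Gamma_\beta^\pi$ is a singleton.

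\emph{Singleton $\Rightarrow$ limit exists.} Suppose $\Gamma_\beta^\pi=\{\mu\}$. By the previous step, in every coordinate $\limsup a_t=\liminf a_t=\mu(x,u)$, so each coordinate converges, and hence the vector converges to $\mu$. This also gives $\sup(\Gamma_\beta^\pi)=\inf(\Gamma_\beta^\pi)=\mu_{\beta,\pi}$.

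Alternatively, the converse can be shown by contradiction. If the sequence did not converge to $\mu$, some subsequence would stay $\epsilon$-away from $\mu$. By compactness that subsequence has a further convergent subsequence, whose limit is an element of $\Gamma_\beta^\pi$ different from $\mu$, contradicting the singleton assumption.
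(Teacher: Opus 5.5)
Your proposal is correct and takes essentially the paper's route: the paper gives no proof beyond citing the scalar Bolzano--Weierstrass and $\limsup$/$\liminf$ results in \cite[Ch.~3.4]{bartle2011introduction}, and you apply exactly those componentwise to the bounded sequence $\mu^t_{\beta,\pi}$ in the simplex. Your further-subsequence step for a joint vector limit, and your caveat that the componentwise $\sup(\Gamma_\beta^\pi)$ need not belong to $\Gamma_\beta^\pi$, correctly tighten the paper's looser wording ``largest and smallest subsequential limits.''
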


For stationary policies, we have the following property.
\begin{lemma}\label{lemma:singleton}
    For any $\pi \in \Pi_{\T S}$, $\Gamma_{\beta}^\pi = \{\mu_\pi\}$ for all $\beta$, where 
    \begin{equation}
        \mu_{\pi}(x, u) = \nu_\pi(x) \pi(u|x),
    \end{equation}
    and $\nu_\pi$ is the stationary distribution of $\{X_t\}$ under $\pi$. 
\end{lemma}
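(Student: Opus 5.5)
The plan is to reduce the statement to the ergodic theorem for finite Markov chains under the unichain assumption. Fix $\pi \in \Pi_{\T S}$ and an arbitrary $\beta$. Since $\pi$ is stationary, $\{X_t\}$ is a time-homogeneous Markov chain with transition matrix $\M{P}_\pi$. Also, since $U_n$ is drawn from $\pi(\cdot|X_n)$ independently of the past given $X_n$, we have
\begin{equation*}
{\sf P}^\pi_\beta(X_n = x, U_n = u) = {\sf P}^\pi_\beta(X_n = x)\,\pi(u|x) = (\beta \M{P}_\pi^{n-1})(x)\,\pi(u|x).
\end{equation*}
Consequently $\mu^t_{\beta,\pi}(x,u) = \bar\nu^t_\beta(x)\,\pi(u|x)$, where $\bar\nu^t_\beta := \frac{1}{t}\sum_{n=1}^t \beta \M{P}_\pi^{n-1}$ is the Cesàro average of the state marginals. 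It therefore suffices to show that $\bar\nu^t_\beta \to \nu_\pi$ for every $\beta$. Once this holds, the ordinary limit of $\mu^t_{\beta,\pi}$ exists and equals $\nu_\pi(x)\pi(u|x)$. By Lemma~\ref{lemma:sequential_limits}, the set $\Gamma^\pi_\beta$ is then the singleton $\{\mu_\pi\}$, and this limit does not depend on $\beta$.

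The key step is the Cesàro convergence $\frac{1}{t}\sum_{n=0}^{t-1}\M{P}_\pi^n \to \M{1}\nu_\pi$, where every row of the limit is the unique stationary distribution $\nu_\pi$. I will argue it in three parts.

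First, the averages $A_t := \frac1t\sum_{n=0}^{t-1}\M{P}_\pi^n$ are stochastic matrices in a compact set, so they have subsequential limits. Any such limit $A$ satisfies $A\M{P}_\pi = \M{P}_\pi A = A$, because $A_t\M{P}_\pi - A_t = \frac1t(\M{P}_\pi^t - I) \to 0$. Hence every row of $A$ is a stationary distribution of $\M{P}_\pi$.

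Second, the unichain assumption gives a single recurrent class $R$. I will show the stationary distribution is unique. Any stationary distribution vanishes on transient states, since their mass leaks into $R$ and is never returned. Restricted to the irreducible class $R$, uniqueness is standard, for example via Perron–Frobenius or the representation $\nu(y) = 1/{\sf E}_y[\text{return time to } y]$.

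Third, every row of every subsequential limit $A$ must therefore equal $\nu_\pi$. So $A_t \to \M{1}\nu_\pi$ along the full sequence, and therefore $\bar\nu^t_\beta = \beta A_t \to \nu_\pi$ for every $\beta$.

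I expect the main obstacle to be the uniqueness of the stationary distribution in the presence of transient states, and the fact that periodicity rules out ordinary convergence of $\M{P}_\pi^n$, which is why Cesàro averaging is essential. For the transient part I will use the fact that in a finite chain the transient states are left in finite expected time. Concretely, if $\nu$ is stationary and $T$ is the transient set, then $\nu(T) = \nu\M{P}_\pi^n(T) \to 0$, because $\M{P}_\pi^n(x,T) \to 0$ for transient $x$. Everything else is routine.
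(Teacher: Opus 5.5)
Your proposal is correct and follows the paper's route. Like the paper, you factor $\mu^t_{\beta,\pi}(x,u)$ as $\pi(u|x)$ times the Ces\`{a}ro average of $\beta\M{P}_\pi^{n-1}$, and then use the unichain assumption to identify every row of the Ces\`{a}ro limit with $\nu_\pi$. The only difference is that you prove the Ces\`{a}ro convergence yourself (compactness, $A\M{P}_\pi = A$, and uniqueness of the stationary distribution), whereas the paper cites it. Note that your convergence argument depends on unichain, while the fact the paper cites, that $\M{P}^*_\pi$ exists for every finite chain, is also what it reuses in Lemma~\ref{lemma:multi-chain polytope}, which is claimed without unichain.
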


Lemma~\ref{lemma:singleton} implies that under any stationary policy $\pi \in \Pi_{\T S}$, each objective $J_{\beta, k}(\pi)$ is independent of $\beta$ and can be expressed linearly via the occupancy measure $\mu_\pi$ as
\begin{equation}
    J_k(\pi) = \langle \mu_\pi, c_k\rangle := \sum_{x, u} \mu_\pi(x, u) c_k(x, u).
\end{equation}
For a general non-stationary policy $\pi$, the sequence $\{\mu_{\beta, \pi}^t\}$ may not converge, and $\Gamma_\beta^\pi$ is not a singleton. As a result, the average cost satisfies
\begin{equation*}
    J_{\beta, k}(\pi) \geq \langle \mu, c_k\rangle\,\,\,\text{for all}\,\,\,\mu\in \Gamma_{\beta}^\pi.
\end{equation*}

\section{Properties of the Pareto Front}\label{sec:main results}
\subsection{Geometric Results}
This section establishes geometric properties of the set of occupancy measures and the set of objective vectors. We begin by introducing an auxiliary set $\Phi \subset \mathbb{R}^{|{\sf X}||{\sf U}|}$, which underpins our subsequent analysis. The set $\Phi$ is defined as
\begin{equation}
    \Phi \hspace{-0.3em}:=\hspace{-0.3em} \left\{
    \begin{aligned}
    &\phi(x, u): (x, u)\in {\sf X} \times {\sf U},\\
    &\sum_{u}\phi(x^\prime, u) \hspace{-0.25em}=\hspace{-0.25em} \sum_{x, u}\phi(x, u)\M{P}(x, u, x^\prime), x^\prime \in {\sf X},\\
    &\sum_{x, u}\phi(x, u) \hspace{-0.25em}=\hspace{-0.25em} 1, \,\phi(x, u) \geq 0, (x, u)\in {\sf X} \times {\sf U}
    \end{aligned}
    \right\}\label{eq:Phi set}
\end{equation}
We write the marginal over states as $\phi(x) = \sum_{u}\phi(x, u)$.

A useful property of $\Phi$ is the following.
\begin{lemma}
    $\Phi$ is a convex polytope.
\end{lemma}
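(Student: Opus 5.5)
The statement follows from the standard fact that a bounded polyhedron is a polytope. The plan is to show that $\Phi$ is a polyhedron that is bounded and non-empty, and then invoke the Minkowski--Weyl theorem.

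\emph{Polyhedron.} Every constraint in \eqref{eq:Phi set} is linear in the finite-dimensional vector $\phi\in\mathbb{R}^{|{\sf X}||{\sf U}|}$. Finiteness of the dimension comes from Assumption~1. For each $x^\prime\in{\sf X}$, the balance equation
\begin{equation*}
\sum_u\phi(x^\prime,u)-\sum_{x,u}\phi(x,u)\M{P}(x,u,x^\prime)=0
\end{equation*}
is a linear equality. The normalization $\sum_{x,u}\phi(x,u)=1$ is also a linear equality. Finally, $\phi(x,u)\ge 0$ gives finitely many linear inequalities. Hence $\Phi=\{\phi:A\phi=b,\ \phi\ge 0\}$ for a suitable matrix $A$ and vector $b$. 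As a finite intersection of closed half-spaces and hyperplanes, $\Phi$ is a closed convex polyhedron. Convexity can also be checked directly: any convex combination of two feasible points satisfies all the equalities and stays nonnegative.

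\emph{Bounded.} Nonnegativity together with $\sum_{x,u}\phi(x,u)=1$ forces $0\le\phi(x,u)\le 1$ for every pair $(x,u)$. So $\Phi$ lies inside the probability simplex and is bounded.

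\emph{Non-empty.} This step makes the conclusion meaningful rather than vacuous. Take any $\pi\in\Pi_{\T S}$, for instance a deterministic one. Under the unichain assumption and finiteness, $\M{P}_\pi$ has a stationary distribution $\nu_\pi$. Set $\phi=\mu_\pi$, with $\mu_\pi(x,u)=\nu_\pi(x)\pi(u|x)$ as in Lemma~\ref{lemma:singleton}. Then
\begin{equation*}
\sum_{x,u}\mu_\pi(x,u)\M{P}(x,u,x^\prime)=\sum_x\nu_\pi(x)\M{P}_\pi(x,x^\prime)=\nu_\pi(x^\prime)=\sum_u\mu_\pi(x^\prime,u),
\end{equation*}
and $\mu_\pi$ is nonnegative and sums to one. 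Hence $\mu_\pi\in\Phi$.

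\emph{Conclusion.} A non-empty bounded polyhedron is the convex hull of its finitely many vertices, by the Minkowski--Weyl theorem. The vertices are the basic feasible solutions of $\{A\phi=b,\ \phi\ge0\}$, and there are at most finitely many of them because there are finitely many choices of basis. Therefore $\Phi$ is a convex polytope.

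There is no real obstacle here. The only points needing care are non-emptiness, which uses the existence of a stationary distribution, and noting that boundedness comes from the normalization constraint. Identifying the vertices with deterministic policies is not needed for this lemma and can be deferred.
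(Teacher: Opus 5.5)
Your proof is correct and matches the paper's reasoning. The paper gives no separate proof and treats the lemma as immediate from \eqref{eq:Phi set} being a finite system of linear equalities and inequalities, bounded by normalization and nonnegativity. Your non-emptiness step (a stationary policy's occupancy measure lies in $\Phi$) is a harmless extra detail that the paper only establishes later, in Part~i) of the proof of Lemma~\ref{lemma:unichain polytope}.
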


The following standard definitions are needed.
\begin{definition}
    Let $\mathcal{P} \subset \mathbb{R}^{D}$ denote a $D$-dimensional polytope defined by the intersection of a finite number of linear constraints. A point $p \in \mathcal{P}$ is a \emph{vertex} (or extreme point) of $\mathcal{P}$ if it cannot be expressed as a convex combination of two other points in $\mathcal{P}$; that is, there do not exist $p_1, p_2\in \mathcal{P}$, $p_1 \neq p_2$, and $\alpha \in (0, 1)$ such that $p = \alpha p_1 + (1-\alpha) p_2$. Let $\mathcal{V}(\mathcal{P})$ denote the set of vertices of $\mathcal{P}$. Each vertex corresponds to at least $D$ active constraints, among which $D$ are linearly independent. Two vertices are \emph{adjacent} if there exists a set of $D-1$ linearly independent constraints that are active at both vertices. For any $v \in \mathcal{V}(\mathcal{P})$, let $\mathcal{N}(v) \subset \mathcal{V}(\mathcal{P})$ denote the set of vertices adjacent to $v$. For any $v^\prime \in \mathcal{N}(v)$, the line segment joining $v$ and $v^\prime$ is an \emph{edge} of $\mathcal{P}$, denoted by $\mathcal{G}(v, v^\prime)$. A subset $\mathcal{F} \subset \mathcal{P}$ is a \emph{face} of $\mathcal{P}$ if there exists a $c \in \mathbb{R}^D$ such that $\mathcal{F} = \argmin_{p \in \mathcal{P}} \langle c, p \rangle$. Any proper face of $\mathcal{P}$ has dimension $0 \leq d \leq D - 1$. A $0$-dimensional, $1$-dimensional, and $(D-1)$-dimensional face is a vertex, an edge, and a \emph{facet}, respectively. The \emph{boundary} of $\mathcal{P}$ is the union of all its proper faces. The operation $\operatorname{conv}(\mathcal{S})$ denotes the \emph{convex hull} of a set of points $\mathcal{S} \subseteq \mathcal{P}$.
\end{definition}

A main result of this section follows. Theorem~\ref{theorem:occupancy measure}(a) establishes the equivalence between the set of occupancy measures $\Gamma_\beta$ and the auxiliary set $\Phi$. Theorem~\ref{theorem:occupancy measure}(b)--(d) further elucidate the relationships between deterministic policies and the vertices, adjacent vertices, and edges of $\Phi$.

\begin{theorem}\label{theorem:occupancy measure}
    $\Gamma_\beta$ and $\Phi$ have the following properties:
    \begin{enumerate}
        \item[(a)] For all $\beta$, $\Gamma_\beta = \Gamma_\beta^{\Pi_{\T S}} = \operatorname{conv}(\Gamma_\beta^{\Pi_\T{D}})= \Phi$. Each $\mu \in \Gamma_\beta$ is realizable by a stationary policy $\pi_\mu \in \Pi_{\T S}$, where
        \begin{equation}
            \pi_{\mu}(u|x) = \begin{cases}
                \mu(x, u) / \mu(x), &\mu(x)>0,\\
                {\rm arbitrary}, &\mu(x)=0.
            \end{cases} \label{eq:recover policy}
        \end{equation}
        \item[(b)] $\mathcal{V}(\Phi) = \Gamma_\beta^{\Pi_{\T{D}}}$ for any $\beta$. 
        \item[(c)] Any two adjacent vertices of $\Phi$ differ in one state.
        \item[(d)] Let $\phi_1, \phi_2 \in \mathcal{V}(\Phi)$ be adjacent vertices differing in state $i_0$, and let $\pi_{\phi_1}, \pi_{\phi_2} \in \Pi_{\T D}$ denote the corresponding deterministic policies. Each $\phi \in \mathcal{G}(\phi_1, \phi_2)$, i.e., $\phi = b \phi_1 + (1-b) \phi_2$ for $b \in (0, 1)$, can be realized by a simple mixing policy $\pi = \alpha \pi_{\phi_1} + (1-\alpha) \pi_{\phi_2}$, where
        \begin{equation}
            \alpha = \frac{b \nu_{\pi_{\phi_1}}(i_0)}{b \nu_{\pi_{\phi_1}}(i_0) + (1-b) \nu_{\pi_{\phi_2}}(i_0)}. \label{eq:mixing coefficient}
        \end{equation}
    \end{enumerate}
\end{theorem}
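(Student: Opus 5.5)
For (a) I will prove a chain of inclusions. First, $\Gamma_\beta \subseteq \Phi$. For any $\pi\in\Pi$ and any subsequential limit $\mu=\lim_n \mu^{t_n}_{\beta,\pi}$, the finite-horizon frequencies satisfy
\[
\sum_u \mu^t(x',u) - \sum_{x,u}\mu^t(x,u)\M{P}(x,u,x') = \tfrac1t\big({\sf P}(X_1=x') - {\sf P}(X_{t+1}=x')\big) = O(1/t).
\]
Together with nonnegativity and normalization, this passes to the limit and gives $\mu\in\Phi$.

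Second, $\Phi \subseteq \Gamma_\beta^{\Pi_{\T S}}$. Given $\phi\in\Phi$, define $\pi_\phi$ by \eqref{eq:recover policy}. The balance equation says exactly that the marginal $\phi(x)$ is stationary for $\M{P}_{\pi_\phi}$. The unichain assumption makes the stationary distribution unique, so $\nu_{\pi_\phi}=\phi(\cdot)$. Lemma~\ref{lemma:singleton} then gives $\mu_{\pi_\phi}=\phi$ for every $\beta$. Here it matters that states with $\phi(x)=0$ are transient or have zero mass, so the arbitrary choice there is harmless.

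Since $\Gamma_\beta^{\Pi_{\T S}}\subseteq\Gamma_\beta$ trivially, this yields $\Gamma_\beta=\Gamma_\beta^{\Pi_{\T S}}=\Phi$. The identity with $\operatorname{conv}(\Gamma^{\Pi_{\T D}}_\beta)$ will follow from (b) together with the fact that a bounded polytope is the hull of its vertices. (Boundedness holds because $\Phi$ lies in the simplex.)

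For (b), I will first show that $\mu_\pi$ is a vertex when $\pi$ is deterministic. Suppose $\mu_\pi=b\phi_1+(1-b)\phi_2$ with $\phi_1,\phi_2\in\Phi$. Both $\phi_i$ must vanish wherever $\mu_\pi$ does, so $\phi_i(x,u)=0$ for $u\neq\pi(x)$. Each $\phi_i$ is then the occupancy measure of a stationary policy that agrees with $\pi$ on the support. Uniqueness of the stationary distribution of $\M{P}_\pi$ then forces $\phi_1=\phi_2$.

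Conversely, let $\phi$ be a vertex. I will argue by basic-feasible-solution counting. The equality system has $|{\sf X}|+1$ rows but rank at most $|{\sf X}|$, since the balance equations sum to a redundancy. So a vertex has at most $|{\sf X}|$ positive coordinates. Every recurrent state $x$ with $\phi(x)>0$ needs at least one positive pair. If some state carried two positive pairs, I would split $\pi_\phi$ at that state into two stationary policies. Each has a well-defined occupancy measure in $\Phi$ by unichain, and $\phi$ is a nontrivial combination of the two, contradicting extremality. Making this splitting rigorous is exactly the computation in (d), so I will prove (d)'s formula first as a lemma and reuse it. States with $\phi(x)=0$ receive an arbitrary deterministic action. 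Hence $\phi=\mu_\pi$ for some $\pi\in\Pi_{\T D}$.

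For (d), let $\pi=\alpha\pi_{\phi_1}+(1-\alpha)\pi_{\phi_2}$, which randomizes only at $i_0$. I will verify directly that the candidate $\psi=b\phi_1+(1-b)\phi_2$ satisfies $\psi\in\Phi$ and $\psi(x,u)/\psi(x)=\pi(u|x)$. Away from $i_0$ both measures put all mass on the common action, so the conditional is deterministic there. At $i_0$, the conditional probability of $\pi_{\phi_1}(i_0)$ is
\[
\frac{b\nu_{\pi_{\phi_1}}(i_0)}{b\nu_{\pi_{\phi_1}}(i_0)+(1-b)\nu_{\pi_{\phi_2}}(i_0)}=\alpha.
\]
By (a), $\psi=\mu_\pi$, which is \eqref{eq:mixing coefficient}. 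The regenerative cycle picture gives an alternative interpretation but is not needed.

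For (c), I will use the polytope definition of adjacency. The vertices $\mu_{\pi_1},\mu_{\pi_2}$ are adjacent when the edge between them is the minimal face containing both. The union of their supports then has size at most $|{\sf X}|+1$, one more than a basis. If $\pi_1,\pi_2$ differed at two states $i,j$, I would construct the deterministic policy $\pi_3$ that agrees with $\pi_1$ except at $i$. Its occupancy measure has support inside the union of the two supports. I will then show $\mu_{\pi_3}$ lies in the smallest face containing $\mu_{\pi_1}$ and $\mu_{\pi_2}$ yet off the segment joining them, contradicting that this face is an edge. I expect this step to be the main obstacle. The difficulty is handling degeneracy, where transient states have zero mass and several deterministic policies map to the same vertex. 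The claim must be read as ``some representing policies differ in one state'', and I would make the face argument through the zero pattern of coordinates, i.e. faces of $\Phi$ are obtained by fixing coordinates to zero.
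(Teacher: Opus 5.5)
Your treatment of (a), (b) and (d) is sound and more elementary than the paper's. You obtain $\Gamma_\beta\subseteq\Phi$ by passing the balance identity to the limit for arbitrary history-dependent policies. The paper instead proves $\Gamma_\beta=\operatorname{conv}(\Gamma_\beta^{\Pi_{\T D}})$ via initial randomization and a separating-hyperplane argument that relies on optimal deterministic policies; that route also works without unichain. You prove (d) by matching conditionals through \eqref{eq:recover policy}, where the paper uses a renewal-reward computation over regeneration cycles.

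One detail is missing in (b) and (d). Solving for $b\in(0,1)$, and getting $\mu_{\pi_1}\neq\mu_{\pi_2}$, needs $\nu_{\pi_1}(i_0),\nu_{\pi_2}(i_0)>0$. This follows from a fact you will need again below. If stationary $\pi,\pi'$ differ only at states where $\nu_\pi$ vanishes, then $\nu_\pi$ remains invariant for $\M{P}_{\pi'}$, because the altered rows carry no mass. By unichain, $\nu_{\pi'}=\nu_\pi$ and $\mu_{\pi'}=\mu_\pi$. In particular, if two policies differ only at $s$, then $s$ has positive mass under both or under neither.

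The genuine gap is in (c), exactly at the step you flag. The minimal face $F_S=\{\phi\in\Phi:\phi=0\text{ off }S\}$, with $S=\operatorname{supp}\mu_{\pi_1}\cup\operatorname{supp}\mu_{\pi_2}$, is the right framework; the paper instead counts shared active constraints. But your claim that $\operatorname{supp}\mu_{\pi_3}\subseteq S$, with $\mu_{\pi_3}$ off the segment, is false for arbitrary representatives. Two cases show this:
\begin{itemize}
\item Suppose $\pi_1,\pi_2$ also differ at a state $k$ that is transient under $\pi_1$ but recurrent under $\pi_2$. Then $\pi_3$ can enter $R_2$ through $i$, recur at $k$, and put mass on $(k,\pi_1(k))\notin S$.
\item Suppose $\pi_1,\pi_2$ differ only at $i$ and at a state $j$ with $\nu_{\pi_2}(j)=0$. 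Then $\pi_3$ differs from $\pi_2$ only at $j$, so $\mu_{\pi_3}=\mu_{\pi_2}$. Here the vertices really are adjacent, so the argument must fail.
\end{itemize}

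The missing idea is to normalize the representatives first. Let $R_k=\{x:\nu_{\pi_k}(x)>0\}$. Reset $\pi_2:=\pi_1$ off $R_2$, and $\pi_1:=\pi_2$ on $R_2\setminus R_1$. By the fact above, neither occupancy measure changes. Now $\pi_1,\pi_2$ differ exactly on $D=\{x\in R_1\cap R_2:\pi_1(x)\neq\pi_2(x)\}$, which is nonempty because the vertices are distinct.

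Take $i\neq j$ in $D$. The set $R_1\cup R_2$ is closed under $\pi_3$: states in $R_1\setminus\{i\}$ move via $\pi_1$ into $R_1$, state $i$ moves via $\pi_2(i)$ into $R_2$, and states in $R_2\setminus R_1$ move via $\pi_1=\pi_2$ into $R_2$. So the recurrent class of $\pi_3$ lies in $R_1\cup R_2$, and $\operatorname{supp}\mu_{\pi_3}\subseteq S$.

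Next, $\nu_{\pi_3}(i)>0$; otherwise the fact gives $\nu_{\pi_3}=\nu_{\pi_1}$, yet $\nu_{\pi_1}(i)>0$. Hence $\mu_{\pi_3}(i,\pi_2(i))>0=\mu_{\pi_1}(i,\pi_2(i))$. Also $\mu_{\pi_3}(j,\pi_2(j))=0<\mu_{\pi_2}(j,\pi_2(j))$. By (b), $\mu_{\pi_3}$ is a third vertex of $F_S$, so $F_S$ is not an edge.

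With this normalization your argument for (c) goes through. It is also more careful about transient states than a bare count of shared active constraints.
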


Note that Theorem~\ref{theorem:occupancy measure}(a) does not imply a one-to-one correspondence between the set of policies $\Pi$ and the set of occupancy measures $\Phi$. While each stationary policy $\pi \in \Pi_{\T S}$ produces a unique occupancy measure $\mu_\pi \in \Phi$, some history-dependent non-stationary policies may yield the same occupancy measure. Conversely, every occupancy measure $\mu \in \Phi$ can be realized by some stationary policy $\pi_\mu \in \Pi_{\T S}$ defined by~\eqref{eq:recover policy}, though the choice is not unique.

Theorem~\ref{theorem:occupancy measure}(a) implies that there is no loss of optimality in restricting attention to stationary policies. Consequently, solving the (MO)MDP over the full policy space $\Pi$ is equivalent to solving an (MO)LP over the convex polytope $\Phi$. Formally, a linear scalarized MDP is equivalent to the LP
\begin{equation}
    \min_{\phi \in \Phi}\,\, J(\phi; \V{w}) = \langle \phi, c^{\V{w}} \rangle.
\end{equation}
An MOMDP can be expressed as the MOLP
\begin{equation}
    \min_{\phi \in \Phi}\,\,\V{J}(\phi) = \big( J_1(\phi), \ldots, J_K(\phi)\big), 
\end{equation}
where $J_k(\phi) = \langle \phi, c_k \rangle$. We will write $J(\phi; \V{w}) = \langle \V{w}, \V{J}(\phi) \rangle$.

\begin{lemma}
    $\V{J}: \Phi \to \mathbb{R}^K$ is a linear mapping.
\end{lemma}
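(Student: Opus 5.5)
The lemma follows directly from the definition of $\V{J}$ on $\Phi$: each component $J_k(\phi) = \langle \phi, c_k\rangle = \sum_{x,u}\phi(x,u)c_k(x,u)$ is a linear functional of $\phi$. We assemble these components into a matrix and then check that the map really is the objective map for policies.

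First, stack the cost vectors $c_1,\ldots,c_K$ (each viewed as a vector in $\mathbb{R}^{|{\sf X}||{\sf U}|}$) as the rows of a matrix $\M{C}\in\mathbb{R}^{K\times |{\sf X}||{\sf U}|}$. Then $\V{J}(\phi) = \M{C}\phi$ for every $\phi\in\Phi$. Consequently, for any $\phi_1,\phi_2\in\Phi$ and $a\in[0,1]$, the point $a\phi_1+(1-a)\phi_2$ lies in $\Phi$ (convexity, by the lemma that $\Phi$ is a convex polytope), and
\[
\V{J}\bigl(a\phi_1+(1-a)\phi_2\bigr)=a\V{J}(\phi_1)+(1-a)\V{J}(\phi_2).
\]
More generally, $\V{J}$ is the restriction of the linear map $\phi\mapsto\M{C}\phi$ on $\mathbb{R}^{|{\sf X}||{\sf U}|}$. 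Since $\Phi$ is convex rather than a subspace, "linear mapping" should be read as exactly this restriction.

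Second, confirm that this $\V{J}$ agrees with the policy objective map. For $\phi\in\Phi$, Theorem~\ref{theorem:occupancy measure}(a) gives a stationary policy $\pi_\phi$ with $\mu_{\pi_\phi}=\phi$. By Lemma~\ref{lemma:singleton} and the displayed identity $J_k(\pi)=\langle\mu_\pi,c_k\rangle$, we get $\V{J}_\beta(\pi_\phi)=\M{C}\phi$ for every $\beta$. So $\V{J}$ is well defined on $\Phi$ independently of which realizing policy is chosen.

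There is no real obstacle here. The only point needing care is the interpretive one: $\Phi$ is not a vector space, so linearity means that $\V{J}$ extends to a linear map on the ambient space, equivalently that it is affine and preserves convex combinations on $\Phi$. Consequently, $\mathcal{J}_\beta^{\Pi_{\T S}}=\M{C}\Phi$ is the image of a polytope under a linear map and hence a polytope, which is presumably how the lemma will be used.
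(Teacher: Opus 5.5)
Your proposal is correct and matches the paper. The paper gives no separate proof and treats the lemma as immediate from the definition $J_k(\phi)=\langle \phi, c_k\rangle$; it uses linearity only in the convex-combination form $\V{J}(\sum_i \lambda_i v_i)=\sum_i\lambda_i \V{J}(v_i)$, exactly as you interpret it, and your extra consistency check with the policy objectives via Theorem~\ref{theorem:occupancy measure}(a) and Lemma~\ref{lemma:singleton} is unnecessary but harmless.
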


Let $\mathcal{J}^{\Phi} = \{\mathbf{J}(\phi): \phi \in \Phi\}$ denote the set of achievable objective vectors. We denote by $\mathcal{V}^*(\Phi) \subseteq \mathcal{V}(\Phi)$ the set of Pareto optimal vertices, and by $\mathcal{N}^*(\phi) = \mathcal{N}(\phi) \cap \mathcal{V}^*(\Phi)$ the set of Pareto optimal vertices adjacent to vertex $\phi \in \mathcal{V}(\Phi)$. 

The following result summarizes the key structural properties of $\mathcal{J}^{\Phi}$ and the Pareto front. Theorem~\ref{theorem:frontier}(a) implies that any vertex of $\mathcal{J}^\Phi$ is induced by a vertex of $\Phi$, but the converse may not hold. Theorem~\ref{theorem:frontier}(c)--(f) establish that the Pareto front is a \emph{continuous, piecewise linear} surface that lies on the boundary of the convex polytope $\mathcal{J}^\Phi$. 

\begin{theorem}\label{theorem:frontier}
The Pareto front has the following properties:
\begin{enumerate}
    \item[(a)] $\mathcal{J}^{\Phi}$ is a convex polytope in $\mathbb{R}^K$ and $\mathcal{V}(\mathcal{J}^\Phi) \subseteq \mathcal{J}^{\mathcal{V}(\Phi)}$.
    \item[(b)] For all $\beta$, $\mathcal{J}_\beta^\Pi = \mathcal{J}_\beta^{\Pi_{\T S}}= \operatorname{conv}(\mathcal{J}_\beta^{\Pi_{\T D}}) = \mathcal{J}^\Phi$.
    \item[(c)] The Pareto front lies on the boundary of $\mathcal{J}^\Phi$.
    \item[(d)] Each $\phi \in \mathcal{V}^*(\Phi)$ minimizes $J(\phi; \V{w})$ for some $\V{w}\succ \V{0}$.
    \item[(e)] Each $\phi \in \mathcal{V}^*(\Phi)$ has at least one adjacent Pareto optimal vertex, i.e., $1 \leq |\mathcal{N}^*(\phi)| \leq |{\sf X}|(|\sf U| - 1)$.
    \item[(f)] Let $\phi, \phi^\prime \in \mathcal{V}^*(\Phi)$ be adjacent Pareto optimal vertices. Then the edge $\mathcal{G}(\phi, \phi^\prime)$ lies on the Pareto front.
\end{enumerate}
\end{theorem}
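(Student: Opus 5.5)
We prove the six parts in order, using Theorem~\ref{theorem:occupancy measure} and the linearity of $\V{J}$ throughout.

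For (a), $\Phi$ is a polytope, hence the convex hull of its finitely many vertices. Since $\V{J}$ is linear, $\mathcal{J}^\Phi = \V{J}(\operatorname{conv}\mathcal{V}(\Phi)) = \operatorname{conv}(\V{J}(\mathcal{V}(\Phi)))$, which is a convex polytope in $\mathbb{R}^K$. Any vertex of a convex hull of a finite set belongs to that set, so $\mathcal{V}(\mathcal{J}^\Phi)\subseteq \mathcal{J}^{\mathcal{V}(\Phi)}$.

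For (b), there are three inclusions.
\begin{enumerate}
\item[(i)] Theorem~\ref{theorem:occupancy measure}(b) gives $\mathcal{V}(\Phi)=\Gamma_\beta^{\Pi_{\T D}}$, and Lemma~\ref{lemma:singleton} gives $\V{J}_\beta(\pi)=\V{J}(\mu_\pi)$ for stationary $\pi$. Together with (a) these yield $\operatorname{conv}(\mathcal{J}_\beta^{\Pi_{\T D}})=\mathcal{J}^\Phi$.
\item[(ii)] Theorem~\ref{theorem:occupancy measure}(a) realizes each $\phi\in\Phi$ by a stationary $\pi_\phi$, so $\mathcal{J}^\Phi=\mathcal{J}_\beta^{\Pi_{\T S}}\subseteq\mathcal{J}_\beta^\Pi$.
\item[(iii)] The reverse inclusion $\mathcal{J}^\Pi_\beta\subseteq\mathcal{J}^\Phi$ is the delicate step. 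The $\limsup$ in each coordinate may be attained along different subsequences, so $\V{J}_\beta(\pi)$ need not equal $\V{J}(\mu)$ for a single $\mu\in\Gamma_\beta^\pi$. We only get $J_{\beta,k}(\pi)\ge \langle\mu,c_k\rangle$ for every $\mu\in\Gamma^\pi_\beta\subseteq\Phi$. I would handle this in one of two ways. The first is to show that the vector of limsups lies in $\mathcal{J}^\Phi$ (for instance when costs admit it) by constructing a stationary policy that realizes it. The second, if exact equality fails, is to state (b) up to dominance, i.e.\ $\mathcal{J}_\beta^\Pi\subseteq\mathcal{J}^\Phi+\mathbb{R}^K_{\ge0}$, which suffices for all Pareto statements.
\end{enumerate}
This step is the main obstacle. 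I would try to follow the paper's intended reading, where history-dependent policies are dominated by some point of $\mathcal{J}^\Phi$.

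For (c), take a Pareto optimal point $\V{J}^*$. If it were interior to $\mathcal{J}^\Phi$, then $\V{J}^*-\epsilon\V{1}$ would lie in $\mathcal{J}^\Phi$ for small $\epsilon>0$ and would dominate $\V{J}^*$, a contradiction.

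For (d), a Pareto optimal vertex $\phi$ maps to a Pareto optimal point $\V{J}(\phi)$ of the polytope $\mathcal{J}^\Phi$. The standard MOLP result (Isermann's theorem, proved via LP duality or a separating hyperplane applied to $\mathcal{J}^\Phi$ and $\V{J}(\phi)-\mathbb{R}^K_{\ge0}\setminus\{\V{J}(\phi)\}$) gives strictly positive weights $\V{w}\succ\V{0}$ such that $\V{J}(\phi)$ minimizes $\langle\V{w},\cdot\rangle$ over $\mathcal{J}^\Phi$. Equivalently, $\phi$ minimizes $J(\cdot;\V{w})$ over $\Phi$. Strict positivity, rather than mere nonnegativity, comes from the polyhedrality of $\mathcal{J}^\Phi$.

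For (e) and (f), fix $\V{w}\succ\V{0}$ from (d). The minimizing set $\mathcal{F}=\argmin_{\Phi}J(\cdot;\V{w})$ is a face of $\Phi$. Every point of $\mathcal{F}$ is Pareto optimal, since any dominating point would have strictly smaller weighted cost because $\V{w}\succ\V{0}$.
\begin{itemize}
\item[(e)] If $\mathcal{F}$ has dimension at least one, $\phi$ has an adjacent vertex within $\mathcal{F}$, which is Pareto optimal. If $\mathcal{F}=\{\phi\}$, perturb $\V{w}$ toward a neighbouring optimality cone, i.e.\ parametric LP (simplex pivoting) along a path of strictly positive weights, until a tie with an adjacent vertex appears. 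Such a tie must occur because the Pareto front is connected and contains more than one vertex; the degenerate case where a single point dominates everything is handled separately. The upper bound $|\mathcal{N}^*(\phi)|\le|{\sf X}|(|{\sf U}|-1)$ follows from Theorem~\ref{theorem:occupancy measure}(c): each adjacent vertex changes the action at exactly one state, giving at most $|{\sf X}|(|{\sf U}|-1)$ neighbours.
\item[(f)] For adjacent Pareto optimal $\phi,\phi'$, I would show that some $\V{w}\succ\V{0}$ makes both optimal. Consider the minimal face containing the edge and use the fact that the efficient set of an MOLP is a union of faces, each being the optimal face for some strictly positive weight. Then every convex combination lies in the optimal face and is therefore Pareto optimal.
\end{itemize}
Proving that an edge joining two efficient vertices is itself efficient is the second delicate point. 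In general MOLP this can fail, so I expect to need the unichain and adjacency structure, or to argue via the connectedness of the efficient set together with the standard fact that efficient edges connect the efficient vertices.
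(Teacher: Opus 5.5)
\textbf{The genuine gap is in (f), and it cannot be closed.} Your plan needs a single $\V{w}\succ\V{0}$ that makes both endpoints optimal. You hoped the unichain assumption and the one-state adjacency structure would supply it; they do not.

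Counterexample. Take ${\sf X}=\{0,1,2\}$, ${\sf U}=\{a,b\}$, with deterministic transitions $0\xrightarrow{a}0$, $0\xrightarrow{b}1$, $1\xrightarrow{a}0$, $1\xrightarrow{b}2$, $2\xrightarrow{a,b}0$. This is unichain, since $0$ is reachable from every state under every policy. Write $e_{xu}$ for the unit vectors. The vertices are $v_1=e_{0a}$, $v_2=\tfrac12(e_{0b}+e_{1a})$, $v_3=\tfrac13(e_{0b}+e_{1b}+e_{2a})$ and $v_4=\tfrac13(e_{0b}+e_{1b}+e_{2b})$. They are affinely independent and $\dim\Phi=3$, so $\Phi$ is a simplex and every pair of vertices is adjacent. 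In particular, $v_1$ and $v_3$ come from policies differing only at state $0$.

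Set $\V{c}(0,a)=(0,3)$, $\V{c}(1,a)=(1,1)$, $\V{c}(2,a)=(9,0)$, $\V{c}(2,b)=(9,9)$, and all other costs zero. Then $\V{J}(v_1)=(0,3)$, $\V{J}(v_2)=(\tfrac12,\tfrac12)$, $\V{J}(v_3)=(3,0)$ and $\V{J}(v_4)=(3,3)$. Both $v_1$ and $v_3$ are Pareto optimal, as the unique minimizers of $J_1$ and $J_2$. Yet the midpoint of $\mathcal{G}(v_1,v_3)$ has objective $(\tfrac32,\tfrac32)$, which is strictly dominated by $\V{J}(v_2)$.

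The paper's one-line proof of (f) (``from Part (e)'') is a non sequitur for exactly this reason. The argument for (e) produces \emph{some} adjacent Pareto optimal vertex sharing an optimal face with $\phi$, not every adjacent one. Two things do hold, and your (d)--(e) arguments already deliver them:
\begin{enumerate}
\item[(i)] an edge lies on the front if and only if it is contained in $\argmin_{\Phi}J(\cdot;\V{w})$ for some $\V{w}\succ\V{0}$ (apply Isermann's theorem to its midpoint);
\item[(ii)] each Pareto optimal vertex has at least one such incident edge.
\end{enumerate}
Statement (f) must be weakened to this.

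\textbf{Two further steps fail.}

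The upper bound in (e), which you derive exactly as the paper does, is false. A vertex with transient states is the occupancy measure of several deterministic policies. Counting one-state changes of a single policy therefore does not count the neighbours of the vertex. Take $0\xrightarrow{a}0$, $0\xrightarrow{b}1$, $1\xrightarrow{a,b}2$, $2\xrightarrow{a,b}0$. Then $\Phi$ is a pyramid over a square, whose apex $e_{0a}$ is adjacent to all four base vertices $\tfrac13(e_{0b}+e_{1u}+e_{2u'})$. With $\V{c}(0,a)=(1,0)$ and all other costs $(0,1)$, every vertex is Pareto optimal, so $|\mathcal{N}^*(e_{0a})|=4>3=|{\sf X}|(|{\sf U}|-1)$. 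The bound needs an extra hypothesis, for instance that every state is recurrent under every deterministic policy.

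In (b), only your second option is viable. Take one state and two actions with costs $(0,1)$ and $(1,0)$, and a policy alternating them on the blocks $[10^j,10^{j+1})$. Both $\limsup$'s are at least $0.9$, so $\V{J}_\beta(\pi)\notin\mathcal{J}^\Phi$. The paper's appeal to Theorem~\ref{theorem:occupancy measure}(a) silently identifies $\V{J}_\beta(\pi)$ with $\V{J}(\mu)$ for a single $\mu\in\Gamma_\beta^\pi$. The correct statement is $\mathcal{J}_\beta^{\Pi_{\T S}}=\operatorname{conv}(\mathcal{J}_\beta^{\Pi_{\T D}})=\mathcal{J}^\Phi$ together with $\mathcal{J}_\beta^\Pi\subseteq\mathcal{J}^\Phi+\mathbb{R}^K_{\ge0}$, and this suffices for (c)--(f).

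Finally, your route to (d) through Isermann's theorem is the right one. The paper's argument for (d) proves only the converse: that minimizers of strictly positively weighted sums are Pareto optimal.
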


The vertices and edges constitute the 1-skeleton of the Pareto front. Crucially, policies on this skeleton are \emph{quasi-deterministic}: any point on an edge can be realized by a simple mixing policy that randomizes in at most a single state, and the mixing coefficient is given in closed form. Although the full front consists of higher-dimensional faces, the skeleton provides a sufficient representative solution set because (i) the associated policies are easy to compute and implement; (ii) any optimal solution to a linear scalarized MDP is attained at a vertex; and (iii) every point on the Pareto front is a convex combination of these vertices.

We show in the next subsection, however, that an optimal solution for a general nonlinear scalarized MDP may reside within the interior of a face. In such cases, the optimal policy may require randomizing in up to $K$ states, necessitating a more comprehensive search over the full Pareto front.

\subsection{Scalarized MDPs}
In general, the scalarization function $f$ is assumed to be strictly increasing (but not necessarily convex). The intuition is that the user's preference improves if the cost of one objective decreases while the other objective values remain unchanged.

\begin{definition}
    The scalarization function $f$ is said to be \emph{strictly increasing} if for any objective vectors $\V{J}, \V{J}^\prime \in \mathcal{J}^\Phi$ such that $\V{J} \preceq \V{J}^\prime$ and $\V{J} \neq \V{J}^\prime$, it holds that $f(\V{J}) < f(\V{J}^\prime)$.
\end{definition}


The next theorem shows that we can restrict attention to the Pareto front when solving nonlinear scalarized MDPs.
\begin{theorem}\label{theorem:nonlinear scalarization}
    Let $f$ be strictly increasing and let $\pi^*$ be an optimal policy for the scalarized MDP. Then $\pi^*$ is Pareto optimal and can be represented as a convex combination of at most $K$ Pareto-optimal deterministic policies.
\end{theorem}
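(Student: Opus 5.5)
The plan has two steps. First, show Pareto optimality by a domination argument that uses strict monotonicity of $f$. Second, get the representation from Carath\'eodory's theorem applied inside a proper face of the polytope $\mathcal{J}^\Phi$.

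\emph{Step 1 (Pareto optimality).} By Theorem~\ref{theorem:frontier}(b), $\V{J}^* := \V{J}_\beta(\pi^*) \in \mathcal{J}^\Phi$, even if $\pi^*$ is history-dependent. Suppose $\pi^*$ were not Pareto optimal. Then some policy $\pi$ has $\V{J}_\beta(\pi) \preceq \V{J}^*$ with $\V{J}_\beta(\pi) \neq \V{J}^*$, and again $\V{J}_\beta(\pi) \in \mathcal{J}^\Phi$. Strict monotonicity of $f$ gives $f(\V{J}_\beta(\pi)) < f(\V{J}^*)$, which contradicts optimality of $\pi^*$. So $\V{J}^*$ lies on the Pareto front.

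\emph{Step 2 (at most $K$ vertices).} By Theorem~\ref{theorem:frontier}(c), $\V{J}^*$ lies on the boundary of the convex polytope $\mathcal{J}^\Phi \subset \mathbb{R}^K$. Hence it lies in some proper face $\mathcal{F}$, which has dimension at most $K-1$ and is itself a polytope whose vertices are vertices of $\mathcal{J}^\Phi$. Carath\'eodory's theorem in the affine hull of $\mathcal{F}$ (dimension $\le K-1$) writes $\V{J}^* = \sum_{i=1}^m \lambda_i \V{v}_i$ with $m \le K$, $\lambda_i > 0$, $\sum_i \lambda_i = 1$, and $\V{v}_i \in \mathcal{V}(\mathcal{J}^\Phi)$. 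If $\mathcal{J}^\Phi$ is itself lower-dimensional, the same argument applies in its affine hull, and the bound only improves. By Theorem~\ref{theorem:frontier}(a), each $\V{v}_i = \V{J}(\phi_i)$ for some $\phi_i \in \mathcal{V}(\Phi)$. By Theorem~\ref{theorem:occupancy measure}(b), each $\phi_i = \mu_{\pi_i}$ for a deterministic policy $\pi_i \in \Pi_{\T D}$.

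Each $\pi_i$ is Pareto optimal. If some $\V{v}_j$ were dominated by $\V{w} \in \mathcal{J}^\Phi$, then $\lambda_j \V{w} + \sum_{i\neq j}\lambda_i \V{v}_i \in \mathcal{J}^\Phi$ (by convexity) would dominate $\V{J}^*$, since $\lambda_j > 0$. This contradicts Step~1.

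\emph{Step 3 (realization as a policy).} Set $\phi^* = \sum_i \lambda_i \phi_i \in \Phi$. By linearity of $\V{J}$, $\V{J}(\phi^*) = \V{J}^*$. By Theorem~\ref{theorem:occupancy measure}(a), $\phi^*$ is realized by the stationary policy $\pi_{\phi^*}$ from \eqref{eq:recover policy}, namely
\[
\pi_{\phi^*}(u|x) = \frac{\sum_i \lambda_i \nu_{\pi_i}(x)\,\mathbf{1}\{u=\pi_i(x)\}}{\sum_i \lambda_i \nu_{\pi_i}(x)}.
\]
This is a state-dependent convex combination of $\pi_1,\dots,\pi_m$ and generalizes \eqref{eq:mixing coefficient}. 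It achieves $\V{J}^*$, and hence the optimal value $f(\V{J}^*)$.

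\emph{Main obstacle.} The delicate point is the meaning of ``represented as a convex combination''. At the level of occupancy measures and objective vectors, the representation is exact with constant weights $\lambda_i$. At the policy level, the weights become state-dependent, as in Theorem~\ref{theorem:occupancy measure}(d). I would therefore state the result in the occupancy-measure sense and then give the explicit recovering policy above. A second check is that the $K$ bound really uses the properness of the face containing $\V{J}^*$. Plain Carath\'eodory in $\mathbb{R}^K$ would only give $K+1$ points, so the boundary property from Theorem~\ref{theorem:frontier}(c) is essential.
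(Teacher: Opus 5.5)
Your proposal is correct and follows essentially the same route as the paper: Pareto optimality via the strict-monotonicity domination argument, then Carath\'eodory's theorem inside a proper face of $\mathcal{J}^\Phi$ (available by Theorem~\ref{theorem:frontier}(c)) to get at most $K$ vertices, each induced by a deterministic policy through Theorems~\ref{theorem:frontier}(a) and~\ref{theorem:occupancy measure}(b). Your additions are sound refinements of details the paper leaves implicit: you check that every vertex with positive weight is itself Pareto optimal (the paper asserts this without proof), you handle the lower-dimensional case, and you give the explicit stationary recovering policy. One caveat is that the weights need not become state-dependent at the policy level, since the initial-randomization policy in Part~(i) of the proof of Lemma~\ref{lemma:multi-chain polytope} realizes the mixture with the constant weights $\lambda_i$, at the cost of stationarity.
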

\begin{proof}
    Suppose $\pi^*$ is not Pareto optimal. Then there exists an objective vector $\V{J}_{\beta}(\pi)$ that strictly dominates $\V{J}_{\beta}(\pi^*)$, i.e., $\V{J}_{\beta}(\pi) \preceq \V{J}_{\beta}(\pi^*)$ and $\V{J}_{\beta}(\pi) \neq \V{J}_{\beta}(\pi^*)$. Since $f$ is strictly increasing, we have $f(\V{J}_{\beta}(\pi)) < f(\V{J}_{\beta}(\pi^*))$. This contradicts the optimality of $\pi^*$ for the scalarized problem. Thus, $\pi^*$ is Pareto optimal.

    Recall from Theorem~\ref{theorem:frontier} that the Pareto front lies on the boundary of the $K$-dimensional convex polytope $\mathcal{J}^\Phi$. Each face of $\mathcal{J}^\Phi$ has dimension $d \leq K-1$. By Carath\'eodory's theorem~\cite[Sec.~17]{rockafellar1997convex}, any point in a $d$-dimensional face can be expressed as a convex combination of at most $d+1$ vertices of that face. Since each vertex of $\mathcal{J}^\Phi$ corresponds to a deterministic policy, $\pi^*$ can be represented as a mixture of at most $K$ Pareto-optimal deterministic policies.
\end{proof}


\begin{figure}[t!]
    \centering
    \scalebox{0.74}{\begin{tikzpicture}[scale=1.0]
    \node [draw, rounded corners=1.5pt, rectangle, minimum width=1.5cm, minimum height=0.8cm, thick] (source) {Process};
    \node [draw, rounded corners=1.5pt, rectangle, right=0.9cm of source, minimum width=1.5cm, minimum height=0.8cm, thick] (sensor) {Sensor};
    \node [draw, align=center, rounded corners=2pt, rectangle, minimum width=1.5cm, minimum height=0.8cm, right=2cm of sensor, thick] (channel) {Channel};
    \node [draw, rounded corners=1.5pt, rectangle, right=0.9cm of channel, minimum width=1.5cm, minimum height=0.8cm, thick] (receiver) {Receiver};
    \node [right=0.8cm of receiver, thick] (end) {};
    \node [coordinate, right=0.8 of sensor, thick] (switch) {};
    
    \draw[->, thick] (source) -- node[above] {\large $Z_t$} (sensor);
    \draw[-, thick] (sensor) -- node[above] {\large $\hat{Z}^{\T{loc}}_t$} (switch);
    \draw[-, thick] (switch) -- ++(0.55, 0.5) {};
    \draw[-, thick] (switch)++(0.5, 0) -- node[above] {} (channel);
    \draw[->, bend left, thick] ($(switch)+(0.3,0.5)$) to ($(switch)+(0.7,0.)$);
    \draw[->, thick] (channel) -- node[above] {} coordinate[midway] (mid-cd) (receiver);
    \draw[->, thick] (receiver) -- node[above] {\large $\hat{Z_t}$} (end);
    
    \draw[->, dashed, thick] (mid-cd) -- ++(0,-0.8) -| 
    node[pos=0.25, below] {} 
    ($(sensor.south west)!0.5!(sensor.south east)$);
\end{tikzpicture}}
    \vspace{-0.1in}
    \caption{Remote state estimation of a linear Gaussian process.}
    \label{fig:system_model}
\end{figure}
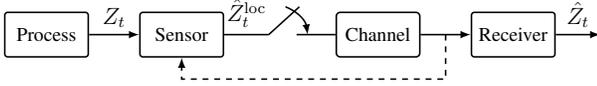

\section{Case Study: Remote Estimation of Linear Gaussian Processes}\label{sec:example}
\subsection{Problem Description}
In this section, we apply our structural results to a remote estimation problem, e.g.,~\cite{schenato2008optimal, leong2017sensor, luo2025remote}. Consider a point-to-point communication system depicted in Figure~\ref{fig:system_model}, where a sensor observes a linear Gaussian process
\begin{equation*}
    Z_{t+1} = \M{A}Z_t + W_t, 
\end{equation*}
where $Z_t \in \mathbb{R}^{n}$ is the system state at time $t$, $\M{A} \in \mathbb{R}^{n \times n}$ is the state transition matrix, and $W_t \in \mathbb{R}^{n}$ is zero-mean Gaussian process noise with covariance $\M{Q} \succeq \V{0}$. The sensor measurements are given by $Y_t = \M{C} Z_t + V_t$, where $\M{C} \in \mathbb{R}^{m \times n}$ is the measurement matrix and $V_t \in \mathbb{R}^m$ is zero-mean Gaussian measurement noise with covariance $\M{R} \succ \V{0}$. The sensor employs a steady-state Kalman filter to produce a local estimate $\hat{Z}_t^{\T{loc}}$ with a \emph{constant} error covariance $\bar{\mathbf{K}}$.

Due to energy constraints, the sensor can transmit only intermittently. Let $U_t \in \{0,1\}$ denote the sensor's action: the sensor sends the local estimate $\hat{Z}^{\T{loc}}_t$ to the receiver when $U_t = 1$ and remains silent when $U_t = 0$. We assume an i.i.d. packet-dropping channel $\{H_t\}$, where each transmitted packet is successfully delivered with probability ${\sf E}[H_t] = p_\T{s}$. The sensor is aware of the channel outcome through a feedback link. The receiver's estimate then evolves as
\begin{equation*}
    \hat{Z}_{t} = U_tH_t \hat{Z}^\T{loc}_{t} + 
		(1 - U_tH_t)\M{A} \hat{Z}_{t-1}.
\end{equation*}
Let $X_t := t - \max\{\tau \leq t : U_\tau = 1\}$ denote the time elapsed since the last successful status update. To avoid technicalities, we impose $X_t = \min\{X_t, X_{\max}\}$. The error covariance at the receiver is then given by
\begin{equation*}
    \M{K}_{t} = U_tH_t \bar{\M{K}} + (1 - U_tH_t) (\M{A}\M{K}_{t-1}\M{A}^\top + \M{Q}).
\end{equation*}
Moreover, we can write $\operatorname{Tr}(\M{K}_t) = \eta(X_t)$, where $\eta: \mathbb{N} \to \mathbb{R}$ is an increasing function parametrized by $\bar{\M{K}}$, $\M{A}$, and $\M{Q}$.

The goal is to find a communication policy $\pi$ that balances the average estimation error $J_{1}(\pi)$ and the average communication rate $J_{2}(\pi)$, where
\begin{align*}
    J_{1}(\pi) &:= \limsup_{T \to \infty}\frac{1}{T}\sum_{t=1}^T {\sf E}^\pi \big[\eta(X_t) \mid X_1 = 0 \big],\\
    J_{2}(\pi) &:= \limsup_{T \to \infty}\frac{1}{T}\sum_{t=1}^T {\sf E}^\pi \big[U_t \mid X_1 = 0 \big].
\end{align*}
A common approach to characterize this trade-off is linear scalarization~\cite{leong2017sensor, luo2025remote}, i.e.,
\begin{equation*}
    \min_{\pi \in \Pi}\, J(\pi; \lambda) = J_1(\pi) + \lambda J_2(\pi),
\end{equation*}
where $\lambda > 0$ denotes the cost per transmission.

\begin{lemma}[{\hspace{-0.03em}\cite{leong2017sensor}}]
\label{lemma:threshold}
    Let $\|\M{A}\|^2(1 - p_\T{s}) < 1$. Then the optimal policy $\pi^*$ for the linear scalarized MDP is a threshold policy; that is, $\pi^*(x) = 1$ if $x \geq x_\T{th}$ and $\pi^*(x) =0$ otherwise.
\end{lemma}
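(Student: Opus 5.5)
The plan is to reduce the linear scalarized problem to a single-objective average-cost MDP with cost $c^\lambda(x,u)=\eta(x)+\lambda u$. By Theorem~\ref{theorem:occupancy measure}(a) and Theorem~\ref{theorem:frontier}(b) this is without loss of optimality, since an optimal policy can be taken stationary and deterministic (a vertex of $\Phi$). Write the dynamics on the truncated state space $\{0,\ldots,X_{\max}\}$. If $U_t=1$ and the packet is delivered, with probability $p_{\T s}$, then $X_{t+1}$ resets (to $0$ with the indexing used); otherwise $X_{t+1}=\min\{x+1,X_{\max}\}$. The chain is unichain, because every stationary policy reaches $X_{\max}$ or a reset, so the average-cost optimality equation (ACOE)
\begin{equation*}
h(x)+g=\min\Big\{\eta(x)+h(x^+),\;\eta(x)+\lambda+p_{\T s}h(0)+(1-p_{\T s})h(x^+)\Big\},\qquad x^+=\min\{x+1,X_{\max}\},
\end{equation*}
has a solution $(g,h)$ with $g$ the optimal average cost.

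Comparing the two branches, transmitting is optimal at $x$ if and only if $p_{\T s}\big(h(x^+)-h(0)\big)\ge\lambda$. The threshold structure therefore follows once I show that $h$ is nondecreasing in $x$. Then $h(x^+)-h(0)$ is nondecreasing, so the set of states where transmission is optimal is an upper set $\{x\ge x_{\T{th}}\}$.

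To prove monotonicity of $h$, I would use the vanishing-discount approach. For the $\gamma$-discounted problem, run value iteration $V_{n+1}=TV_n$ from $V_0\equiv 0$. Inductively, if $V_n$ is nondecreasing then so is each branch of the Bellman operator: $\eta$ is increasing, $x\mapsto x^+$ is monotone, and the reset term $p_{\T s}V_n(0)$ does not depend on $x$. The minimum of nondecreasing functions is nondecreasing, so $V_\gamma$ is nondecreasing. Then $h_\gamma=V_\gamma-V_\gamma(0)$ is nondecreasing and bounded uniformly in $\gamma$; the finite state space and unichain assumption give the bound via a standard span argument. Taking a subsequential limit $\gamma\to1$ yields a relative value function $h$ that solves the ACOE and is nondecreasing.

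The main obstacle is the role of the hypothesis $\|\M A\|^2(1-p_{\T s})<1$. On the truncated finite state space it seems unnecessary for the argument above. Its real purpose, in the untruncated setting of \cite{leong2017sensor}, is to guarantee that $\eta(x)$, which grows roughly like $\|\M A\|^{2x}$, has finite expected cost under the always-transmit policy. That in turn makes the average cost finite and the uniform bound on $h_\gamma$ valid on a countable state space. I would therefore handle the bound on $h_\gamma$ using the truncation, and remark that without truncation the condition supplies the needed moment bound, specifically a Lyapunov-type (weighted-norm) bound $\sum_x (1-p_{\T s})^x\eta(x)<\infty$. This ensures the vanishing-discount limit exists and preserves monotonicity.
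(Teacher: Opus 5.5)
Your argument is correct for the truncated model the paper uses. The paper gives no proof here: it imports the lemma from \cite{leong2017sensor}, so yours is a self-contained substitute rather than a reconstruction. You take the standard route for such structural results. Value iteration shows the discounted value function is monotone, the vanishing-discount limit gives a nondecreasing relative value function $h$, and the comparison $p_{\T{s}}(h(x^+)-h(0))\ge\lambda$ makes the set of transmit states an upper set. The citation covers the untruncated, countable state space, where $\eta$ is unbounded. There, a hypothesis like $\|\M{A}\|^2(1-p_{\T{s}})<1$ keeps the cost of always transmitting finite and justifies the vanishing-discount limit. Your route instead exploits the cap at $X_{\max}$ and shows, as you correctly observe, that the hypothesis is superfluous in the paper's setting. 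Your closing paragraph on the untruncated case is only a sketch, which is acceptable since the paper imposes the truncation.

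Three details are worth tightening.

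\textbf{Unichain.} Your justification is loose. It holds for $p_{\T{s}}<1$, since then $X_{\max}$ is reachable from every state under every stationary policy. For $p_{\T{s}}=1$, however, a policy that transmits surely at some $x_0<X_{\max}$ and never at $X_{\max}$ has two recurrent classes. You do not need unichain for the span bound anyway. Compare with ``transmit until the first success, then act optimally'': this gives $0\le h_\gamma(x)\le(\eta(X_{\max})+\lambda)/p_{\T{s}}$ uniformly in $\gamma$ whenever $p_{\T{s}}>0$.

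\textbf{Time indexing.} In the paper, $X_t$ already reflects $U_tH_t$, so its stage cost $\eta(X_t)$ depends on the current action. Your pre-decision state with stage cost $\eta(x)+\lambda u$ shifts the $\eta$ term by one step. This changes the $T$-stage average by at most $\eta(X_{\max})/T$, so every average cost is unchanged, but you should say so.

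\textbf{Uniqueness.} You prove that \emph{an} optimal policy is of threshold type, which is the right reading of the lemma. Uniqueness fails in general: for $p_{\T{s}}<1$, whenever never transmitting is optimal, any policy that never transmits at $X_{\max}$ is optimal too.
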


\begin{figure*}[t!]
    \centering
    \begin{minipage}{0.49\textwidth}
        \centering
        \includegraphics[width=0.85\linewidth]{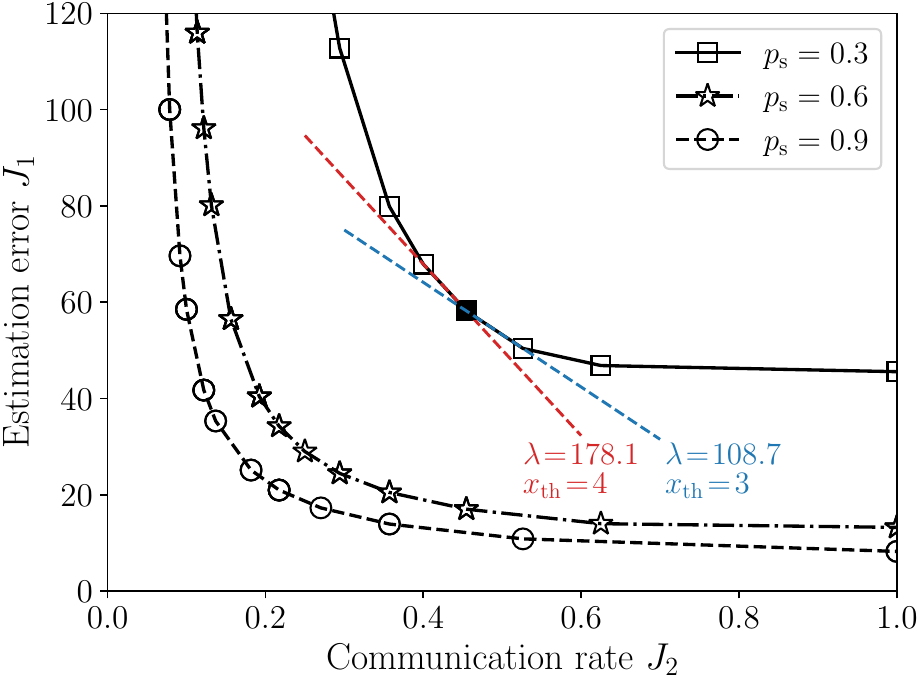}
        \vspace{-0.1in}
        \caption{Pareto front of the estimation system.}
        \label{fig:front}
    \end{minipage}
    \hfill
    \begin{minipage}{0.49\textwidth}
        \centering
        \includegraphics[width=0.85\linewidth]{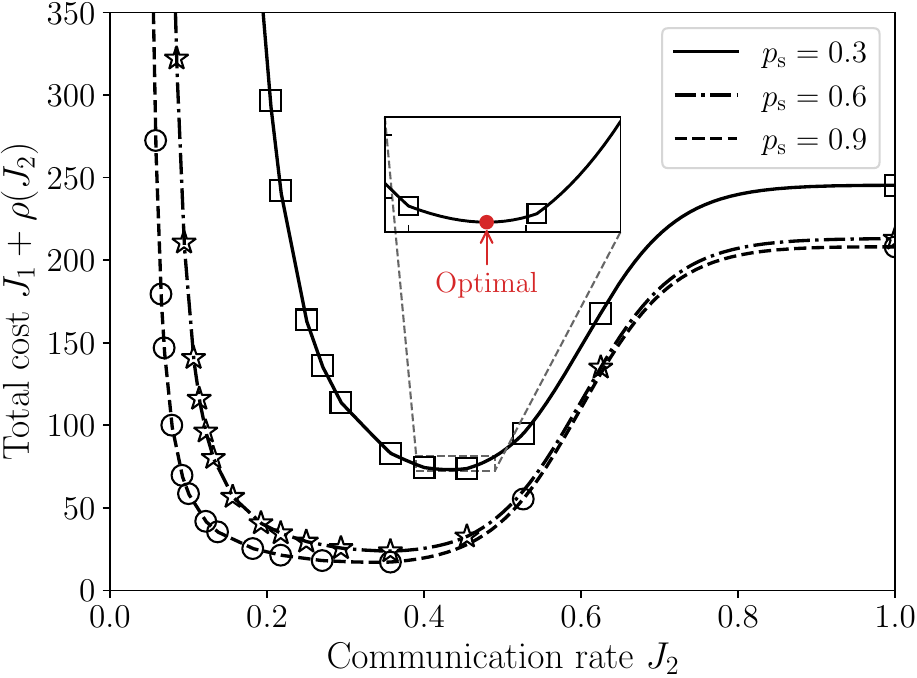}
        \vspace{-0.1in}
        \caption{Achievable total cost for the nonlinear scalarized problem.}
        \label{fig:transformed_front}
    \end{minipage}
\end{figure*}

However, this approach has two key limitations: it yields only a single policy for each preference configuration and fails to capture nonlinear operational costs. For instance, frequent communication may accelerate hardware aging and reduce link reliability. We therefore study two alternative formulations. The first is the MOMDP
\begin{equation*}
    \min_{\pi \in \Pi}\, \V{J}(\pi) = \big(J_{1}(\pi), J_{2}(\pi)\big).
\end{equation*}
The second is a nonlinear scalarization
\begin{equation*}
    \min_{\pi \in \Pi}\, f(\V{J}(\pi)) =  J_{1}(\pi) +  \rho(J_{2}(\pi)),
\end{equation*}
where $\rho: [0, 1] \to \mathbb{R}$ is a strictly increasing function.

\subsection{Analysis and Numerical Results}
From Theorem~\ref{theorem:frontier}, the Pareto front is a piecewise linear, convex curve in the 2D objective plane. Each vertex corresponds to a deterministic policy that solves the linear scalarized MDP for some $\lambda > 0$. Lemma~\ref{lemma:threshold} implies that, although each vertex has $X_{\max}$ adjacent policies, only \emph{two} neighboring threshold policies need to be evaluated. Hence, the Pareto front forms a directed path along the boundary of the polytope and can be computed without solving any MDP. These results are illustrated using the following example. 

Consider the pendubot system in~\cite{schenato2007foundations}, where
\begin{align*}
    \M{A} &= \begin{bmatrix}
    1.0058 & 0.0150 & -0.0016 & 0.0000 \\
    0.7808 & 1.0058 & -0.2105 & -0.0016 \\
    -0.0060 & 0.0000 & 1.0077 & 0.0150 \\
    -0.7962 & -0.0060 & 1.0294 & 1.0077
\end{bmatrix}, \\
\M{C} &= \begin{bmatrix}
1 & 0 & 0 & 0 \\
0 & 0 & 1 & 0
\end{bmatrix}, \M{Q} = \V{q}\V{q}^\top, \M{R} = 0.001\times \M{I},\\
\V{q} &= \begin{bmatrix}
    0.003 & 1 & -0.005 & -2.150
\end{bmatrix}^\top.
\end{align*}

Figure~\ref{fig:front} illustrates the Pareto front of the above system. The absolute slope ($\lambda$) reflects the \emph{value} of communication: its marginal benefit diminishes at high communication rates, while careful calibration is critical when communication is scarce. For $p_\T{s} = 0.3$ and $\lambda \in (108.7, 178.1)$, a single vertex (filled black square) attains the global optimum for all $\lambda$ in this interval. This highlights that linear scalarization is inefficient, as it requires sweeping over a wide range of weights to recover different parts of the front. Moreover, any point on an edge can be obtained by mixing two adjacent threshold policies.

Figure~\ref{fig:transformed_front} shows the total cost on the Pareto front with a sigmoid penalty $\rho(J_2) = 200 / (1 + \exp(-17 (x - 0.6)))$. The optimal policy is found at an edge of the Pareto front.

\appendices

\section{Proof of Lemma~\ref{lemma:singleton}}\label{proof:lemma-singleton}
Let $\pi \in \Pi_{\T S}$ be a stationary policy. Then the state process $\{X_t\}$ governed by $\M{P}_\pi$ forms a time-homogeneous Markov chain. By the Chapman-Kolmogorov equations, 
\begin{equation*}
    {\sf P}^\pi(X_n = x| X_1 = i) = \M{P}_\pi^{n-1}(x|i).
\end{equation*}
We may write the finite-horizon state-action frequency as
\begin{align*}
    \mu_{\beta, \pi}^t(x, u) 
    &= \frac{1}{t}\sum_{n=1}^t\sum_{i\in {\sf X}} \beta(i) {\sf P}^\pi (X_n = x, U_n = u|X_1 = i)\\
    &= \frac{1}{t}\sum_{n=1}^t\sum_{i\in {\sf X}} \beta(i)\pi(u|x)\M{P}_\pi^{n-1}(x|i).
\end{align*}
For any finite Markov chain, the Ces\`{a}ro limit
\begin{equation*}
    \lim_{t\to \infty} \frac{1}{t}\sum_{n=1}^t \M{P}_\pi^{n-1}(x|i) = \M{P}^*_\pi(x|i)
\end{equation*}
always exists. Thus, the occupancy measure converges to
\begin{equation}
    \mu_{\beta, \pi}(x, u) = \sum_{i\in {\sf X}}\beta(i)
    \pi(u|x)\M{P}_\pi^{*}(x|i). \label{eq:cesaro limit}
\end{equation}
Since the ordinary limit exists, $\Gamma_\beta^\pi = \{\mu_{\beta, \pi}\}$ is a singleton. 

Under the unichain assumption, we have $\M{P}^*_\pi(x|i) = \nu_\pi(x)$ for all $i \in {\sf X}$, where $\nu_\pi$ is the unique stationary distribution. Substituting this into~\eqref{eq:cesaro limit} gives
\begin{equation}
    \mu_{\beta, \pi}(x, u) 
    = \pi(u|x)\nu_{\pi}(x), 
\end{equation}
which is independent of $\beta$. This establishes the result.

\section{Proof of Theorem~\ref{theorem:occupancy measure}}\label{proof:theorem-occupancy-measure}
We classify the constraints in~\eqref{eq:Phi set} as follows:
\begin{itemize}
    \item A set of $|{\sf X}|$ \textit{balance} constraints
    \begin{equation*}
        \sum_{u}\phi(x^\prime, u) = \sum_{x, u}\phi(x, u)\M{P}(x, u, x^\prime), \,\, x^\prime \in {\sf X}.
    \end{equation*}
    \item A set of $|{\sf X}||{\sf U}|$  \emph{nonnegativity} constraints
    \begin{equation*}
        \phi(x, u) \geq 0, \,\, (x, u)\in {\sf X}\times {\sf U}.
    \end{equation*}
    \item One \emph{normalization} constraint $\sum_{x, u}\phi(x, u) = 1$.
\end{itemize}

Moreover, there are $|{\sf X}| - 1$ linearly independent balance constraints, which are also independent of the normalization constraint. Thus, the rank of the equality constraints is $|{\sf X}|$. Since a vertex requires $|{\sf X}||{\sf U}|$ active linearly independent constraints~\cite[Theorem~2.3]{bertsimas1997introduction}, at least $|{\sf X}|(|{\sf U}| - 1)$ nonnegativity constraints must be active at any vertex of $\Phi$. Moreover, two vertices are adjacent if and only if they share $|{\sf X}||{\sf U}|-1$ linearly independent active constraints.

\subsection{Proof of Theorem~\ref{theorem:occupancy measure}(a)}
Theorem~\ref{theorem:occupancy measure}(a) establishes the equivalence between the set of occupancy measures $\Gamma_\beta$ and the auxiliary set $\Phi$. The following lemma is a key step in the proof, which holds without the unichain assumption.

\begin{lemma}\label{lemma:multi-chain polytope}
    $\Gamma_\beta = \operatorname{conv}(\Gamma_\beta^{\Pi_{\T S}}) = \operatorname{conv}(\Gamma_\beta^{\Pi_{\T D}})$.
\end{lemma}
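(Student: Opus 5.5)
We will prove the chain $\operatorname{conv}(\Gamma_\beta^{\Pi_{\T D}}) \subseteq \operatorname{conv}(\Gamma_\beta^{\Pi_{\T S}}) \subseteq \Gamma_\beta \subseteq \operatorname{conv}(\Gamma_\beta^{\Pi_{\T D}})$. This is the classical Derman/Altman argument, and it does not use the unichain assumption. Note that $\Gamma_\beta^{\Pi_{\T D}}$ is a finite set, since $\Pi_{\T D}$ is finite, so its convex hull is compact.

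\textbf{First inclusion.} This is immediate from $\Pi_{\T D}\subset\Pi_{\T S}$.

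\textbf{Second inclusion.} We need to show that $\Gamma_\beta$ is convex; it already contains $\Gamma_\beta^{\Pi_{\T S}}$ by definition. Take $\mu_i\in\Gamma_\beta^{\pi_i}$ for $i=1,2$, arising along subsequences, and fix $\alpha\in(0,1)$.
\begin{itemize}
\item First try the history-dependent policy that flips an $\alpha$-coin at time $1$ and then follows $\pi_1$ or $\pi_2$ forever. Its finite-horizon frequencies are exactly $\alpha\mu^t_{\beta,\pi_1}+(1-\alpha)\mu^t_{\beta,\pi_2}$.
\item If both $\pi_i$ are stationary, the limits exist by the Ces\`aro argument in the proof of Lemma~\ref{lemma:singleton}, and the mixture converges to $\alpha\mu_1+(1-\alpha)\mu_2$. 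This already gives $\operatorname{conv}(\Gamma^{\Pi_{\T S}}_\beta)\subseteq\Gamma_\beta$, which is all the chain needs.
\item For convexity of $\Gamma_\beta$ in full generality the two subsequences may not align. Fixing that would require a time-splicing construction. Since it is not needed for the chain, we will bypass it.
\end{itemize}

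\textbf{Third inclusion (main step).} Let $\mu\in\Gamma_\beta^\pi$ be the limit along $t_n$, and suppose for contradiction that $\mu\notin C:=\operatorname{conv}(\Gamma_\beta^{\Pi_{\T D}})$. Since $C$ is a compact convex polytope, a separating hyperplane gives a cost $c$ with
\[
\langle \mu,c\rangle<\min_{d\in\Pi_{\T D}}\langle \mu_{\beta,d},c\rangle .
\]
Now invoke the standard average-cost MDP theory for a finite multichain model. There exists a deterministic stationary policy $d^*$ that is optimal for the $\liminf$ average-cost criterion over all history-dependent policies. Hence
\[
\liminf_{t}\langle\mu^t_{\beta,\pi},c\rangle\ \ge\ \langle\mu_{\beta,d^*},c\rangle .
\]
The left side is at most $\lim_n\langle\mu^{t_n}_{\beta,\pi},c\rangle=\langle\mu,c\rangle$. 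This contradicts the separation inequality.

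\textbf{Main obstacle.} The delicate point is that the optimality of $d^*$ must hold for the $\liminf$ criterion, because an arbitrary subsequence limit is only controlled by $\liminf$ and not by $\limsup$. I would justify this by Blackwell optimality. Take a deterministic policy $d^*$ that is discount-optimal for all discount factors close to $1$. By the Tauberian inequality $\liminf_t \frac1t\sum_{s\le t} a_s\ge \liminf_{\gamma\uparrow1}(1-\gamma)\sum_s\gamma^{s-1}a_s$ for bounded sequences, the discount-optimality of $d^*$ transfers to $\liminf$-average optimality over all policies. Finally, combining the three inclusions gives equality throughout the chain, which is the statement.
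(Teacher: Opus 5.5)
Your plan is the same as the paper's. An initial lottery over stationary (deterministic) policies gives $\operatorname{conv}(\Gamma_\beta^{\Pi_{\mathrm S}})\subseteq\Gamma_\beta$, and a separating hyperplane plus an optimal deterministic policy gives $\Gamma_\beta\subseteq\operatorname{conv}(\Gamma_\beta^{\Pi_{\mathrm D}})$. You also correctly isolate the delicate point, which the paper handles only by citation: to exclude an arbitrary subsequential limit $\mu$, the policy $d^*$ must beat the $\liminf$ of every policy.

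\textbf{The gap is in how you resolve that point: your Tauberian inequality is reversed.} Write $\bar a_t=\frac1t\sum_{s\le t}a_s$ and
$A_\gamma=(1-\gamma)\sum_{s\ge1}\gamma^{s-1}a_s=(1-\gamma)^2\sum_{t\ge1}t\gamma^{t-1}\bar a_t$. The Abel mean is thus a weighted average of Ces\`aro means, which gives
\[
\liminf_{t\to\infty}\bar a_t\le\liminf_{\gamma\uparrow1}A_\gamma\le\limsup_{\gamma\uparrow1}A_\gamma\le\limsup_{t\to\infty}\bar a_t .
\]
Your version is false in general. Take $a_s=1$ when $\lfloor\log_2 s\rfloor$ is even and $a_s=0$ otherwise: $\bar a_t$ oscillates between $1/3$ and $2/3$, while $A_\gamma$ stays near $1/2$. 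Consequently, Blackwell optimality of $d^*$ yields only $\limsup_t\langle\mu^t_{\beta,\pi},c\rangle\ge\langle\mu_{\beta,d^*},c\rangle$. That says the \emph{largest} cluster value $\max_{\mu'\in\Gamma_\beta^\pi}\langle\mu',c\rangle$ is at least the optimum. It says nothing about the particular $\mu$ you separated, so you get no contradiction.

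\textbf{Fix.} You need a bound valid at every finite $t$, not an asymptotic one. Take a solution $(g,h)$ of the multichain average-cost optimality equations with $g$ the optimal gain. Add a large multiple of $g$ to $h$ so that, for \emph{all} $(x,u)$,
$g(x)\le\sum_{x'}\mathbf P(x,u,x')g(x')$ and $g(x)+h(x)\le c(x,u)+\sum_{x'}\mathbf P(x,u,x')h(x')$. Conditioning on the history and telescoping then gives, for every $\pi\in\Pi$ and every $t$,
$\langle\mu^t_{\beta,\pi},c\rangle\ge\langle\beta,g\rangle-2\|h\|_\infty/t$. For a gain-optimal $d^*\in\Pi_{\mathrm D}$ you have $\langle\beta,g\rangle=\langle\mu_{\beta,d^*},c\rangle$. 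Hence $\langle\mu,c\rangle\ge\min_{d\in\Pi_{\mathrm D}}\langle\mu_{\beta,d},c\rangle$ for every $\mu\in\Gamma_\beta$, which is exactly the $\liminf$-optimality you wanted.

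Under the paper's unichain assumption you can skip optimality theory altogether. The telescoping identity the paper uses to show $\Gamma_\beta^{\Pi_{\mathrm S}}\subseteq\Phi$ never uses stationarity before passing to the limit. So every subsequential limit of any policy lies in $\Phi=\operatorname{conv}(\mathcal V(\Phi))$, and $\mathcal V(\Phi)=\Gamma_\beta^{\Pi_{\mathrm D}}$ by Theorem~\ref{theorem:occupancy measure}(b), whose proof does not use this lemma.

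A minor point: a two-point coin only produces two-point combinations. An $m$-point lottery at time $1$, as in the paper, reaches all of $\operatorname{conv}(\Gamma_\beta^{\Pi_{\mathrm S}})$.
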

\begin{proof}
Clearly, for any $\beta$, $\Gamma_\beta^{\Pi_{\T D}} \subset \Gamma_\beta^{\Pi_{\T S}} \subset \Gamma_\beta$. The proof is then divided into three parts.

\textit{Part i):} $\Gamma_\beta \supset \operatorname{conv}(\Gamma_\beta^{\Pi_{\T D}})$. Let $\Pi_{\T D} = \{\pi_1, \ldots, \pi_m\}$. For some $\V{\alpha} \succeq 0$ with $\sum_{i=1}^m\alpha_i = 1$, let $\hat{\pi} \in \Pi_{\T M}$ be an initial mixing policy which selects $\alpha_i$ with probability $\alpha_i$ at $t=1$ and follows it thereafter. Then, for all $n \geq 1$,
\begin{equation*}
    {\sf P}_\beta^{\hat{\pi}}(X_n = x, U_n = u) = \sum_{i=1}^m \alpha_i {\sf P}_\beta^{\pi_i}(X_n = x, U_n = u).
\end{equation*}
The finite-horizon state-action frequency $\mu_{\beta, \hat{\pi}}^t$ satisfies
\begin{align*}
    \mu_{\beta, \hat{\pi}}^t(x, u) &= \frac{1}{t}\sum_{n=1}^t\sum_{i=1}^m \alpha_i {\sf P}_\beta^{\pi_i}(X_n = x, U_n = u)\\
    &= \sum_{i=1}^m \alpha_i \mu_{\beta, \pi_i}^t(x, u).
\end{align*}
Since each $\pi_i$ is stationary, the sequence $\{\mu_{\beta, \pi_i}^t(x, u)\}$ converges to $\mu_{\beta, \pi_i}(x, u)$, which gives that
\begin{equation*}
    \mu_{\beta, \hat{\pi}}(x, u) = \sum_{i=1}^m \alpha_i \mu_{\beta, \pi_i}(x, u).
\end{equation*}
Hence, $\operatorname{conv}(\Gamma_\beta^{\Pi_{\T D}}) \subset \Gamma_\beta$. This proves Part (i).

\textit{Part ii):} $\Gamma_\beta \subset \operatorname{conv}(\Gamma_\beta^{\Pi_{\T D}})$. Suppose, to the contrary, that there exists $\mu^\prime \in \Gamma_\beta$ such that $\mu^\prime \notin \operatorname{conv}(\Gamma_\beta^{\Pi_{\T D}})$. Since $\Gamma_\beta^{\Pi_{\T D}} = \{\mu_\pi: \pi \in \Pi_{\T D}\}$ is finite, its convex hull is compact. By the hyperplane separation theorem~\cite[Sec.~11]{rockafellar1997convex}, there exists a hyperplane that separates $\{\mu^\prime\}$ from $\operatorname{conv}(\Gamma_\beta^{\Pi_{\T D}})$; that is, there exists a nonzero vector $\tilde{c} \in \mathbb{R}^{|{\sf X}||{\sf U}|}$ such that
\begin{equation}
    \langle \mu^\prime, \tilde{c} \rangle < \inf_{\mu \in \operatorname{conv}(\Gamma_\beta^{\Pi_{\T D}})} \langle \mu, \tilde{c} \rangle. \label{eq:separation-1}
\end{equation}
Consider an MDP described by $({\sf X}, {\sf U}, \M{P}, \tilde{c}, \beta)$. It is known (e.g., Puterman~\cite[Ch.~8]{puterman1994markov}) that there exists an optimal deterministic policy $\pi^*\in \Pi_{\T D}$ for this MDP. This implies that
\begin{equation}
    \langle \mu_{\pi^*}, \tilde{c} \rangle 
    = \inf_{\mu \in \Gamma_\beta^{\Pi_{\T D}}}\, \langle \mu, \tilde{c} \rangle 
    \leq \inf_{\mu \in \Gamma_\beta} \,\langle \mu, \tilde{c} \rangle 
    \leq \langle \mu^\prime, \tilde{c} \rangle, \label{eq:separation-2}
\end{equation}
which contradicts~\eqref{eq:separation-1}. This establishes Part (ii).

\textit{Part iii):} $\operatorname{conv}(\Gamma_\beta^{\Pi_{\T S}}) = \operatorname{conv}(\Gamma_\beta^{\Pi_{\T D}})$. Since $\Gamma_\beta^{\Pi_{\T D}} \subset \Gamma_\beta^{\Pi_{\T S}}$, we have $\operatorname{conv}(\Gamma_\beta^{\Pi_{\T D}}) \subset \operatorname{conv}(\Gamma_\beta^{\Pi_{\T S}})$. Conversely, from Part (ii), $\Gamma_\beta^{\Pi_{\T S}} \subset \Gamma_\beta \subset \operatorname{conv}(\Gamma_\beta^{\Pi_{\T D}})$. Taking convex hulls yields
\begin{equation*}
    \operatorname{conv}(\Gamma_\beta^{\Pi_{\T S}}) \subset \operatorname{conv}\big(\operatorname{conv}(\Gamma_\beta^{\Pi_{\T D}})\big) = \operatorname{conv}(\Gamma_\beta^{\Pi_{\T D}}).
\end{equation*}
This proves the lemma.
\end{proof}

Returning to Theorem~\ref{theorem:occupancy measure}(a), it remains to show that $\Gamma_\beta^{\Pi_{\T S}} = \Phi$. This equivalence, combined with the fact that $\Phi$ is a convex polytope, establishes this assertion. This part is formalized in the following lemma.

\begin{lemma}\label{lemma:unichain polytope}
    For all $\beta$, $\Gamma_\beta^{\Pi_{\T S}} = \Phi$. Each $\mu\in \Gamma_\beta^{\Pi}$ is realizable by a stationary policy defined by~\eqref{eq:recover policy}.
\end{lemma}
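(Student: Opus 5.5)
We prove the two inclusions $\Gamma_\beta^{\Pi_{\T S}} \subseteq \Phi$ and $\Phi \subseteq \Gamma_\beta^{\Pi_{\T S}}$ separately, using Lemma~\ref{lemma:singleton} throughout. The realizability claim then follows from Lemma~\ref{lemma:multi-chain polytope}.

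\emph{Step 1 ($\Gamma_\beta^{\Pi_{\T S}} \subseteq \Phi$).} Take $\pi \in \Pi_{\T S}$. By Lemma~\ref{lemma:singleton}, its occupancy measure is $\mu_\pi(x,u) = \nu_\pi(x)\pi(u|x)$. Nonnegativity is immediate. Normalization holds because $\sum_{x,u}\mu_\pi(x,u)=\sum_x \nu_\pi(x) = 1$. The balance constraints are exactly stationarity $\nu_\pi = \nu_\pi \M{P}_\pi$, since
\begin{equation*}
\sum_{x,u}\mu_\pi(x,u)\M{P}(x,u,x') = \sum_x \nu_\pi(x)\M{P}_\pi(x,x') = \nu_\pi(x') = \sum_u \mu_\pi(x',u).
\end{equation*}

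\emph{Step 2 ($\Phi \subseteq \Gamma_\beta^{\Pi_{\T S}}$).} This is the main step. Fix $\phi \in \Phi$ and define $\pi_\phi$ by~\eqref{eq:recover policy}. On states with $\phi(x)=0$, choose any action distribution. We show that the state marginal $\phi(\cdot)$ is a stationary distribution of $\M{P}_{\pi_\phi}$.

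First we need $\phi(x,u) = \phi(x)\pi_\phi(u|x)$ for \emph{all} $x$. For $\phi(x)>0$ this is the definition of $\pi_\phi$. For $\phi(x)=0$, nonnegativity forces $\phi(x,u)=0$, so both sides vanish.

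Substituting into the balance constraint gives
\begin{equation*}
\phi(x') = \sum_x \phi(x)\M{P}_{\pi_\phi}(x,x').
\end{equation*}
Together with normalization and nonnegativity, $\phi(\cdot)$ is therefore a stationary distribution of $\M{P}_{\pi_\phi}$.

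By the unichain assumption, $\M{P}_{\pi_\phi}$ has a single recurrent class and hence a unique stationary distribution. Therefore $\phi(\cdot) = \nu_{\pi_\phi}$. Applying Lemma~\ref{lemma:singleton} then gives $\mu_{\pi_\phi}(x,u) = \nu_{\pi_\phi}(x)\pi_\phi(u|x) = \phi(x,u)$, independent of $\beta$, so $\phi \in \Gamma_\beta^{\Pi_{\T S}}$.

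The obstacle is precisely where unichain is used: without it the stationary distribution is not unique, and the Ces\`aro limit would depend on $\beta$ (as in the multichain case). We would note explicitly that the arbitrary choice at null states does not affect this argument. Such states carry zero stationary mass, and uniqueness still pins down $\nu_{\pi_\phi}=\phi(\cdot)$ whatever the choice.

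\emph{Step 3 (realizability).} For the second claim, let $\mu \in \Gamma_\beta^{\Pi}$.

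By Lemma~\ref{lemma:multi-chain polytope}, $\Gamma_\beta^\Pi = \operatorname{conv}(\Gamma_\beta^{\Pi_{\T D}})$. Since $\Gamma_\beta^{\Pi_{\T D}} \subseteq \Gamma_\beta^{\Pi_{\T S}} = \Phi$ and $\Phi$ is convex (its constraints are linear), $\mu \in \Phi$. Step 2 then shows that $\mu$ is realized by the stationary policy $\pi_\mu$ of~\eqref{eq:recover policy}.
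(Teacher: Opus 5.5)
Your proof is correct. Step~2 coincides with the paper's Part~ii, and you usefully make explicit that $\phi(x,u)=\phi(x)\pi_\phi(u|x)$ also holds at states with $\phi(x)=0$, which the paper uses silently when it rewrites the balance constraint in terms of $\M{P}_{\pi_\phi}$. Steps~1 and~3, however, take a different route.

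For $\Gamma_\beta^{\Pi_{\T S}}\subseteq\Phi$, you read the balance constraints off the explicit limit $\mu_\pi(x,u)=\nu_\pi(x)\pi(u|x)$ of Lemma~\ref{lemma:singleton} together with $\nu_\pi=\nu_\pi\M{P}_\pi$. The paper instead shows that the finite-horizon frequencies $\mu^t_{\beta,\pi}$ satisfy the balance equations up to an $O(1/t)$ telescoping error, and then lets $t\to\infty$. Your version is shorter but applies only to stationary policies. The paper's identity holds for every $\pi\in\Pi$, history-dependent ones included. Applied along any convergent subsequence, it therefore gives $\Gamma_\beta\subseteq\Phi$ directly and without the unichain assumption. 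The paper does not exploit this, since it invokes Lemma~\ref{lemma:singleton} for existence of the limit.

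This bears on your Step~3. You derive the second sentence of the lemma from $\Gamma_\beta=\operatorname{conv}(\Gamma_\beta^{\Pi_{\T D}})$ (Lemma~\ref{lemma:multi-chain polytope}) and convexity of $\Phi$. That is valid and non-circular, and it fills in a step the paper's proof leaves implicit until Theorem~\ref{theorem:occupancy measure}(a). The telescoping route reaches the same conclusion in a self-contained way. It avoids both the separating-hyperplane argument and the existence of optimal deterministic policies, on which Lemma~\ref{lemma:multi-chain polytope} relies.
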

\begin{proof}
We establish the equivalence in two steps. 

\textit{Part i):} $\Gamma_\beta^{\Pi_{\T S}} \subset \Phi$. That is, for each $\mu_{\pi} \in \Gamma_\beta^{\Pi_{\T S}}$ induced by some stationary policy $\pi \in \Pi_{\T S}$, we show that $\mu_{\pi} \in \Phi$. 

The finite-horizon state-action frequency $\mu_{\beta, \pi}^t$ satisfies the normalization and non-negativity conditions
\begin{equation}
    \sum_{x, u}\mu_{\beta, \pi}^t(x, u) = 1\,\,\text{and}\,\, \mu_{\beta, \pi}^t(x, u) \geq 0 \label{eq:normalization condition}
\end{equation}
for all $t\geq 1$. For any $x\in {\sf X}$, the state frequency satisfies 
\begin{align}
    &\mu_{\beta, \pi}^t(x^\prime)
    = \frac{1}{t}\sum_{n=1}^t{\sf P}_\beta^\pi(X_n = x^\prime) \notag\\
    =~& \frac{1}{t}\Big[\sum_{n=2}^{t+1}\sum_{x, u}{\sf P}_\beta^\pi(X_{n-1}=x, U_{n-1}=u) \M{P}(x, u, x^\prime)\notag\\
    & + \beta(x^\prime) - \sum_{x, u}{\sf P}_\beta^\pi(X_{t}=x, U_{t}=u) \M{P}(x, u, x^\prime)
    \Big]\notag\\
    =~&\frac{1}{t}\sum_{n^\prime =1}^{t}\sum_{x, u}{\sf P}_\beta^\pi(X_{n^\prime}=x, U_{n^\prime}=u) \M{P}(x, u, x^\prime)\notag\\
    & + \frac{\beta(x^\prime)}{t} - \frac{\sum_{x, u}{\sf P}_\beta^\pi(X_{t}=x, U_{t}=u) \M{P}(x, u, x^\prime)}{t} \notag\\
    =~& \sum_{x, u} \mu_{\beta, \pi}^t(x, u) \M{P}(x, u, x^\prime)\notag\\
    & + \frac{\beta(x^\prime)}{t} - \frac{\sum_{x, u}{\sf P}_\beta^\pi(X_{t}=x, U_{t}=u) \M{P}(x, u, x^\prime)}{t}.\label{eq:balance condition}
\end{align}
From Lemma~\ref{lemma:singleton}, the limit $\mu_{\pi}(x, u) = \lim_{t\to \infty}\mu_{\beta, \pi}^t(x, u)$ exists. Taking limits at both sides of~\eqref{eq:balance condition} gives
\begin{equation}
    \sum_{u}\mu_{\pi}(x^\prime, u) = \sum_{x, u} \mu_{\pi}(x, u)\M{P}(x, u, x^\prime).
\end{equation}
This, together with~\eqref{eq:normalization condition}, implies $\mu_{\pi} \in \Phi$. Hence, $\Gamma_\beta^{\Pi_{\T S}} \subset \Phi$.

\textit{Part ii):} $\Gamma_\beta^{\Pi_{\T S}} \supset \Phi$. That is, for each $\phi \in \Phi$, there exists some stationary policy $\pi_\phi \in \Pi_{\T S}$ such that $\mu_{\pi_{\phi}} \in \Gamma_\beta^{\Pi_{\T S}}$. 

For any $\phi \in \Phi$, define the stationary policy $\pi_\phi \in \Pi_{\T S}$ as
\begin{equation*}
    \pi_{\phi}(u|x) = \begin{cases}
        \phi(x, u) / \phi(x), &\phi(x)>0,\\
        {\rm arbitrary}, &\phi(x)=0.
    \end{cases} 
\end{equation*}
From the definition of $\Phi$, we have
\begin{align}
    \phi(x^\prime) &= \sum_{x, u}\phi(x, u)\M{P}(x, u, x^\prime)\notag\\
    &= \sum_{x}\phi(x)\sum_{u}\pi_{\phi}(u|x)\M{P}(x, u, x^\prime)\notag\\
    &= \sum_{x}\phi(x)\M{P}_{\pi_\phi}(x,x^\prime).\label{eq:balance equation}
\end{align}
Under the unichain assumption, $\{X_t\}$ under $\pi_\phi$ has a unique stationary distribution $\nu_{\pi_\phi}$ that satisfies~\eqref{eq:balance equation}. Since $\phi(x)$ satisfies~\eqref{eq:balance equation} and $\sum_x\phi(x) = 1$, this implies that $\phi(x) = \nu_{\pi_\phi}(x)$ for all $x$. It follows from Lemma~\ref{lemma:singleton} that 
\begin{equation*}
    \mu_{\pi_\phi}(x, u) = \nu_{\pi_\phi}(x)\pi_{\phi}(u|x) = \phi(x, u).
\end{equation*}
This confirms that $\phi$ is realizable by $\pi_\phi$, hence $\Phi \subset \Gamma_\beta^{\Pi_{\T S}}$. 
\end{proof}

\subsection{Proof of Theorem~\ref{theorem:occupancy measure}(b)}\label{proof:theorem-occupancy-measure-b}
Theorem~\ref{theorem:occupancy measure}(b) states that: (i) any deterministic policy produces an occupancy measure that is a vertex of $\Phi$; and (ii) any vertex of $\Phi$ corresponds to a deterministic policy.

\textit{Part i):} $\Gamma_\beta^{\Pi_{\T D}} \subset \mathcal{V}(\Phi)$. Let $\pi\in \Pi_{\T D}$. Suppose that $\mu_\pi \notin \mathcal{V}(\Phi)$. Then there exist two distinct vectors $\mu_1, \mu_2 \in \Phi$ and a scalar $\alpha \in (0, 1)$ such that $\mu_\pi = \alpha \mu_1 + (1-\alpha) \mu_2$. Since $\pi \in \Pi_{\T D}$, its occupancy measure satisfies
\begin{equation*}
    \mu_\pi(x, u) = \begin{cases}
        \nu_{\pi}(x), &u = \pi(x),\\
        0, &u \neq \pi(x).
    \end{cases}
\end{equation*}
Hence, for all $x \in {\sf X}$ and all $u\neq \pi(x)$, we have
\begin{equation*}
    \alpha \mu_1(x, u) + (1-\alpha) \mu_2(x, u) = 0.
\end{equation*}
Because $\mu_i(x, u)\geq 0$, it follows that $\mu_1(x, u) = \mu_2(x, u) = 0$ for all $x \in {\sf X}$ and $u\neq \pi(x)$. This activates $|{\sf X}|(|{\sf U}| - 1)$ nonnegativity constraints at $\mu_\pi$, and these constraints are linearly independent. In addition, there are $|{\sf X}|$ linearly independent equality constraints active at $\mu_\pi$. Therefore, $\mu_\pi$ is a vertex of $\Phi$, which contradicts the assumption. Hence, we conclude that $\Gamma_\beta^{\Pi_{\T D}} \subset \mathcal{V}(\Phi)$.

\textit{Part ii):} $\mathcal{V}(\Phi) \subset \Gamma_\beta^{\Pi_{\T D}}$. Suppose, to the contrary, that there exists a randomized policy $\pi \notin \Pi_{\T D}$ such that $\mu_\pi \in \mathcal{V}(\Phi)$. Then there exists at least one state $i_0 \in {\sf X}$ for which $\pi(\cdot|i_0)$ is not a Dirac measure. Consequently, there exist two distinct distributions $\psi_1, \psi_2$ and a scalar $\alpha \in (0, 1)$ such that
\begin{equation*}
    \pi(u|i_0) = \alpha \psi_1(u) + (1-\alpha) \psi_2(u).
\end{equation*}

Define two stationary policies $\pi_i$, $i=1, 2$, by
\begin{equation}
    \pi_i(u|x) = \begin{cases}
        \pi(u|x), &x \neq i_0,\\
        \psi_i(u), &x = i_0.
    \end{cases} \label{eq:mixing policy}
\end{equation}
That is, $\pi_1$ and $\pi_2$ differ only in state $i_0$. Then the mixing policy $\pi = \alpha \pi_1 + (1-\alpha)\pi_2$ behaves as follows: whenever the system returns to state $i_0$, it selects $\pi_1$ with probability $\alpha$ and $\pi_2$ with probability $1-\alpha$, and follows the selected policy until the next visit to $i_0$.

Similar to the simple mixing policy in Definition~\ref{def:mixing policy}, $\{X_t\}$ induced by $\pi$ forms a regeneration process. Let $T_\T{c}$ denote the regeneration cycle length (i.e., the time between successive visits to $i_0$), and let $N_\T{c}(x, u)$ denote the number of times action $u$ is taken in state $x$ during one regeneration cycle. Because the choice between $\pi_1$ and $\pi_2$ is made only at regeneration epochs, conditioning on this choice yields
\begin{align}
    \mathbb{E}^{\pi}[T_\T{c}]
    &= \alpha \mathbb{E}^{\pi_1}[T_\T{c}] + (1-\alpha)\mathbb{E}^{\pi_2}[T_\T{c}], \notag \\
    \mathbb{E}^{\pi}[N_\T{c}(x, u)]
    &= \alpha \mathbb{E}^{\pi_1}[N_\T{c}(x, u)] + (1-\alpha)\mathbb{E}^{\pi_2}[N_\T{c}(x, u)]. \notag
\end{align}

Since $\{X_t\}$ induced by $\pi$ and $\pi_i$, $i=1, 2$, contain a single recurrent class, their stationary distributions satisfy
\begin{equation*}
    \mu_\pi(x, u) = \frac{\mathbb{E}^\pi[N_\T{c}(x, u)]}{\mathbb{E}^{\pi}[T_\T{c}]},\, \mu_{\pi_i}(x, u) = \frac{\mathbb{E}^{\pi_i}[N_\T{c}(x, u)]}{\mathbb{E}^{\pi_i}[T_\T{c}]}.
\end{equation*}
In particular, $\nu_\pi(i_0) = 1/\mathbb{E}^{\pi}[T_\T{c}]$ and $\nu_{\pi_i}(i_0) = 1/\mathbb{E}^{\pi_i}[T_\T{c}]$.

Then we may write
\begin{align}
    \mu_\pi(x, u)
    &= \frac{\alpha \mathbb{E}^{\pi_1}[N_\T{c}(x, u)] + (1-\alpha)\mathbb{E}^{\pi_2}[N_\T{c}(x, u)]}{\alpha \mathbb{E}^{\pi_1}[T_\T{c}] + (1-\alpha)\mathbb{E}^{\pi_2}[T_\T{c}]}\notag\\
    &=
    \frac{\alpha \mu_{\pi_1}(x, u)
    \mathbb{E}^{\pi_1}[T_\T{c}]
    + (1-\alpha)\mu_{\pi_2}(x, u)\mathbb{E}^{\pi_2}[T_\T{c}]}
    {\alpha \mathbb{E}^{\pi_1}[T_\T{c}] + (1-\alpha)\mathbb{E}^{\pi_2}[T_\T{c}]}\notag\\
    &= b \mu_{\pi_1}(x, u) + (1-b)\mu_{\pi_2}(x, u),
\end{align}
where 
\begin{equation}
    b = 
    \frac{\alpha \nu_{\pi_2}(i_0)}
    {\alpha \nu_{\pi_2}(i_0) + (1-\alpha)\nu_{\pi_1}(i_0)}. \label{eq:b}
\end{equation}
Thus, $\mu_\pi$ is a convex combination of distinct points in $\Phi$ and cannot be a vertex. This establishes the result.

\subsection{Proof of Theorem~\ref{theorem:occupancy measure}(c)}
Let $\mu_{1}, \mu_{2}$ be two distinct vertices, and let $\pi_1, \pi_2 \in \Pi_\T{D}$ denote their corresponding deterministic policies. Two vertices are adjacent if and only if they share $|{\sf X}||{\sf U}| - 1$ independent active constraints. Under any deterministic policy, each state with positive occupancy contributes $|{\sf U}|-1$ linearly independent active nonnegativity constraints, while a transient state (with zero occupancy) contributes $|{\sf U}|$ active constraints, of which $|{\sf U}|-1$ are linearly independent.

\textit{Sufficiency:} Suppose that $\mu_1$ and $\mu_2$ differ only in state $i_0$. Since $\mu_1 \neq \mu_2$, the actions prescribed at $i_0$ are different, and at least one of the corresponding policies assigns positive occupancy to this state. Both vertices satisfy the same $|{\sf X}|$ equality constraints. For each $x\neq i_0$, the policies are identical and thus share $|{\sf U}| - 1$ active nonnegativity constraints. At $x = i_0$, they share $|{\sf U}| - 2$ active nonnegativity constraints. The number of shared independent active constraints is
\begin{equation*}
   |{\sf X}| + (|{\sf X}| - 1)(|{\sf U}| - 1) + |{\sf U}| - 2 = |{\sf X}||{\sf U}| - 1.
\end{equation*}
Hence, $\mu_1$ and $\mu_2$ are adjacent.

\textit{Necessity:} Suppose $\mu_1$ and $\mu_2$ differ in $\ell > 1$ states where at least one of the policies assigns positive occupancy. At each of these $\ell$ states, they share $|{\sf U}|-2$ active nonnegativity constraints. Therefore, the total number of shared linearly independent active constraints is
\begin{equation*}
|{\sf X}| + (|{\sf X}| - \ell)(|{\sf U}| - 1) + \ell(|{\sf U}| - 2) = |{\sf X}||{\sf U}| - \ell.
\end{equation*}
Since $\ell > 1$, this number is strictly less than $|{\sf X}||{\sf U}| - 1$. Hence, $\mu_1$ and $\mu_2$ cannot be adjacent.

\subsection{Proof of Theorem~\ref{theorem:occupancy measure}(d)}
Let $\phi_1, \phi_2 \in \mathcal{V}(\Phi)$ be adjacent vertices differing in state $i_0$, and let $\pi_{\phi_1}, \pi_{\phi_2} \in \Pi_{\T D}$ denote the corresponding deterministic policies. Define a simple mixing policy
\begin{equation*}
    \pi = \alpha \pi_{\phi_1} + (1-\alpha) \pi_{\phi_2}, 
\end{equation*}
which selects $\pi_{\phi_1}$ with probability $\alpha$ and $\pi_{\phi_2}$ with probability $1-\alpha$ whenever state $i_0$ is visited.

Following the regeneration argument in Appendix~\ref{proof:theorem-occupancy-measure-b}, the occupancy measure induced by $\pi$ satisfies
\begin{equation*}
    \mu_\pi = b \phi_1 + (1-b) \phi_2 \in \mathcal{G}(\phi_1, \phi_2), 
\end{equation*}
where $b$ is given by~\eqref{eq:b}. Since the mapping $\alpha \mapsto b$ is continuous and strictly monotone, every point on the edge $\mathcal{G}(\phi_1, \phi_2)$ can be realized by a suitable choice of $\alpha$. This establishes the result.

\section{Proof of Theorem~\ref{theorem:frontier}}\label{proof:theorem-frontier}
\textit{Part (a):} Let $\mathcal{V}(\Phi) = \{v_1, \ldots, v_m\}$. Since $\Phi$ is a convex polytope, any point $\phi \in \Phi$ can be written as 
\begin{equation*}
    \phi = \sum_{i=1}^m \lambda_i v_i,\,\,\, \text{where}~\lambda_i \geq 0~\text{and}~\sum_{i=1}^m \lambda_i = 1.
\end{equation*}
By the linearity of the operator $\V{J}$, we have
\begin{equation}
    \V{J}(\phi) = \V{J}\big(\sum_{i=1}^m \lambda_i v_i\big) = \sum_{i=1}^m \lambda_i \V{J}(v_i). \label{eq:prove-linearity}
\end{equation}
This shows that every objective vector $\V{J}(\phi) \in \mathcal{J}^\Phi$ is a convex combination of the points in $\mathcal{J}^{\mathcal{V}(\Phi)} = \{\V{J}(v_i) : v_i \in \mathcal{V}(\Phi)\}$. Hence, $\mathcal{J}^\Phi = \operatorname{conv}(\mathcal{J}^{\mathcal{V}(\Phi)})$ is a convex polytope, and its vertex set satisfies $\mathcal{V}(\mathcal{J}^\Phi) \subseteq \mathcal{J}^{\mathcal{V}(\Phi)}$. 

\textit{Part (b):} From Theorem~\ref{theorem:occupancy measure}(a), we have $\mathcal{J}_\beta = \mathcal{J}_\beta^{\Pi_{\T S}} = \mathcal{J}^\Phi$. It follows from~\eqref{eq:prove-linearity} that
\begin{align}
    \operatorname{conv}(\mathcal{J}_\beta^{\Pi_{\T D}})
    &= \operatorname{conv}(\{\V{J} (\phi): \phi \in \Gamma_\beta^{\Pi_{\T D}}\}) \notag\\
    &= \operatorname{conv}(\{\V{J} (\phi): \phi \in \mathcal{V}(\Phi)\}) \notag\\
    &= \{\V{J}(\phi): \phi \in \operatorname{conv}(\mathcal{V}(\Phi))\} \notag\\
    &= \{\V{J}(\phi): \phi \in \Phi\} = \mathcal{J}^{\Phi},
\end{align}
which yields the result.

\textit{Part (c):} Let $\phi^* \in \Phi$ be a Pareto optimal vertex. Suppose that $\V{J}(\phi^*)$ lies in the interior of $\mathcal{J}^\Phi$. Then there exists an open ball $B_\sigma(\V{J}(\phi^*))$ of radius $\sigma>0$ such that $B_\sigma(\V{J}(\phi^*)) \subset \mathcal{J}^\Phi$. We can then choose a unit vector $\V{d} = \frac{1}{\sqrt{K}}[1, \ldots, 1]^\top$ and a sufficiently small scalar $\epsilon>0$ such that the point $\V{y} = \V{J}(\phi^*) - \epsilon \V{d}$ lies within $B_\sigma(\V{J}(\phi^*))$. Since $B_\sigma(\V{J}(\phi^*)) \subset \mathcal{J}^\Phi$, there exists some $\phi \in \Phi$ such that $\V{J}(\phi) = \V{y}$. It follows that $\V{J}(\phi) \prec \V{J}(\phi^*)$, which contradicts the Pareto optimality of $\phi^*$. Thus, $\V{J}(\phi^*)$ must lie on the boundary of $\mathcal{J}^\Phi$.

\textit{Part (d):} Since $\Phi$ is a convex polytope and $\V{J}$ is a linear mapping, there exists a vertex $\phi^* \in \mathcal{V}(\Phi)$ and a nonzero vector $\V{w} \succeq \V{0}$ such that 
\begin{equation}
    \phi^* \in \argmin_{\phi \in \Phi} \,\langle \V{w}, \V{J}(\phi) \rangle.\label{eq:single weight}
\end{equation}
The set of minimizers is given by
\begin{equation}
    \Phi^* := \{\phi \in \Phi: \langle \V{w}, \V{J}(\phi) \rangle = g^*\},
\end{equation}
where $g^* = \langle \V{w}, \V{J}(\phi^*) \rangle$. Clearly, $\Phi^*$ is a face of $\Phi$.

We next show that if $\V{w}\succ \V{0}$, then $\phi^*$ is Pareto optimal. Suppose that $\phi^*$ is not Pareto optimal. Then there exists some $\phi^\prime \in \Phi$ such that $\V{J}(\phi^\prime) \preceq \V{J}(\phi^*)$ and $\V{J}(\phi^\prime) \neq \V{J}(\phi^*)$. Since $\V{w}$ is strictly positive in all components, we obtain
\begin{equation}
    J(\phi^\prime; \V{w}) = \langle \V{w}, \V{J}(\phi^\prime) \rangle < 
    \langle \V{w}, \V{J}(\phi^*) \rangle =  g^*,
\end{equation}
contradicting~\eqref{eq:single weight}. This implies that $\phi^*$ is Pareto optimal. 

\textit{Part (e):} Let $\phi^* \in \mathcal{V}^*(\Phi)$ be a Pareto optimal vertex. From Part (d), $\phi^*$ is a minimizer of the linear functional $J(\phi; \mathbf{w})$ for some $\mathbf{w} \succ \mathbf{0}$. The set of all such minimizers, $\Phi^*$, is a $d$-dimensional Pareto optimal face of $\Phi$.

We first argue that there exists some $\mathbf{w} \succ \mathbf{0}$ such that $d \geq 1$. Suppose, to the contrary, that $\Phi^* = \{\phi^*\}$ for all $\mathbf{w} \succ \mathbf{0}$. This would imply that $\phi^*$ is the unique global optimum for every individual objective $J_k$, which contradicts the assumption that the objectives are in conflict.

Since $\Phi^*$ has dimension $d \geq 1$, the vertex $\phi^* \in \Phi^*$ must be incident to at least one edge $\mathcal{G}(\phi^*, \phi^\prime) \subseteq \Phi^*$. It follows from Part (d) that $\phi^\prime \in \mathcal{N}^*(\phi^*)$, and hence $|\mathcal{N}^*(\phi^*)| \geq 1$.

The upper bound $|\mathcal{N}^*(\phi^*)| \leq |{\sf X}|(|\sf U| - 1)$ follows because each vertex of $\Phi$ corresponds to a deterministic policy, and there are at most $|{\sf X}|(|\sf U| - 1)$ adjacent deterministic policies.

\textit{Part (f):} Let $\phi, \phi^\prime \in \mathcal{V}^*(\Phi)$ be two adjacent Pareto optimal vertices. From Part (e), there exists a $\mathbf{w} \succ \mathbf{0}$ such that $\mathcal{G}(\phi, \phi^\prime)$ is contained in the Pareto optimal face $\Phi^*$.

\bibliographystyle{IEEEtran}
\bibliography{ref}

\end{document}